\newtheorem*{MaxTh}{Maximum Theorem}
\newtheorem*{theorem*}{Theorem}
\newtheorem{theorem}{Theorem}
\newtheorem*{lemma*}{Lemma}
\newtheorem{lemma}{Lemma}
\newtheorem*{proposition*}{Proposition}
\newtheorem{proposition}{Proposition}
\newtheorem*{corollary*}{Corollary}
\newtheorem{corollary}{Corollary}
\newtheorem*{fact*}{Fact}
\theoremstyle{definition}
\newtheorem{definition}{Definition}
\newtheorem{example}{Example}
\crefname{lemma}{Lemma}{Lemmas}
\title{A Maximum Theorem for Incomplete Preferences}
\author{Leandro Gorno \hspace{30pt} Alessandro T. Rivello\vspace{2pt}\\
\hspace{3pt} \textit{FGV EPGE} \hspace{85pt} \textit{FGV EPGE} \hspace{10pt} \vspace{2pt}\\
}
\date{First draft: January 2020. Current version: November 2021.}
\begin{document}

\onehalfspacing

\begin{abstract}
We extend Berge's Maximum Theorem to allow for incomplete preferences. We first provide a simple version of the Maximum Theorem for convex feasible sets and a fixed preference. Then, we show that if, in addition to the traditional continuity assumptions, a new continuity property for the domains of comparability holds, the limits of maximal elements along a sequence of decision problems are maximal elements in the limit problem. While this new continuity property for the domains of comparability is not generally necessary for optimality to be preserved by limits, we provide conditions under which it is necessary and sufficient.\vspace{10pt}\\
\noindent \textit{Keywords:} incomplete preferences, maximum theorem, maximal elements, continuity.\vspace{10pt}\\
\noindent \textit{JEL classifications:} C61, C62.
\end{abstract}

\maketitle

\section{Introduction}

An important issue arising in the study of models involving optimization is whether optimal choices depend continuously on parameters affecting the objective function and the constraints. In fact, continuity of optimal choices plays a key role in the standard fixed-point approach to establish existence of equilibrium both in games and in competitive economies. The main tool to obtain (upper hemi)continuity is the Maximum Theorem by \citet*{Berge1963}, which can be stated as follows:

\begin{MaxTh}
Let $X$ and $\Theta$ be topological spaces; let $u: X \times \Theta \to \mathbb{R}$ be a continuous function; let $K : \Theta \rightrightarrows X$ be a continuous and compact-valued correspondence. Then, the correspondence $M:\Theta \rightrightarrows X$ defined by setting $M(\theta):=\arg\max_{x \in K(\theta)}u(x,\theta)$ for each $\theta \in \Theta$ is upper hemicontinuous and compact-valued.
\end{MaxTh}

This result can be easily modified to dispense with utility functions and deal directly with complete (continuous) preferences and also with incomplete preferences with open asymmetric parts (see \citet*{Walker1979}). However, to the best of our knowledge, none of the existing generalizations of the Maximum Theorem applies to the most prominent types of incomplete preferences such as Pareto orderings, preferences over lotteries admitting an expected multi-utility representation as in \citet*{Dubra2004}, or ordinal preferences possessing a multi-utility representation as in \citet*{Evren2011}. 

This is unfortunate for at least two reasons. First, completeness is not a mere technical assumption but one with significant behavioral implications. As a simple example, suppose that Ann is a consumer who has (monotone) preferences over two goods and chooses bundles $(1,0)$ and $(0,1)$ when prices are equal. Consider a 50\% increase in the price of good 2. If Ann's preferences are complete, her observed choices imply that Ann must be indifferent between $(1,0)$ and $(0,1)$ and, as a result, we would expect her to choose $(1,0)$ always at the new prices. Without completeness, however, Ann need not be indifferent between $(1,0)$ and $(0,1)$ and we can rationalize additional choices, such as $(0,0.5)$. It follows that, for certain applications, allowing for incomplete preferences might be necessary to accommodate the empirical evidence. 

Second, completeness is generally hard to defend on normative grounds, particularly when modeling decisions under uncertainty,\footnote{The main argument supporting the widespread adoption of the completeness axiom seems to be tractability. von Neumann and Morgestern themselves stated that: ``\textit{We have conceded that one may doubt whether a person can always decide which of two alternatives (...) he prefers. (...) If the general comparability assumption is not made, (...) a mathematical theory is still possible. It leads to what may be described as a many-dimensional vector concept of utility. This is a more complicated and less satisfactory set-up, but we do not propose to treat it systematically at this time.}'', \citet*{vNM1953}, p. 28-29.} and often fails for theoretical reasons in social choice contexts.

Although obtaining a maximum theorem for incomplete preference is desirable, the following example shows that it is not a trivial task:

\begin{example}
\label{ex:linear_consumer}
Ann chooses among bundles of two goods subject to a standard budget constraint. Her preferences are incomplete and can be represented with two utilities $u(q_1, q_2) = q_1 + q_2$ and $v(q_1, q_2) = q_1 + 2 q_2$ (\textit{i.e.}, a bundle $q=(q_1, q_2)$ is considered at least as good as another bundle $q'=(q'_1, q'_2)$ if and only if $u(q) \geq u(q')$ and $v(q) \geq v(q')$). The price of good 1 is normalized to $p_1=1$ and there is a decreasing sequence of prices for good 2, $p_{2,n} = 1+1/n$. As illustrated in Figure~\ref{fig:consumer}, if Ann's wealth is $w=1$, for each $n \in \mathbb{N}$, the bundle $A = (1,0)$ is optimal at prices $(p_1, p_{2,n})$: there is no feasible bundle that Ann strictly prefers to $A=(1,0)$. However, in the limit $n\to +\infty$, the bundle $A=(1,0)$ is no longer optimal because Ann strictly prefers the bundle $B=(0,1)$, which is feasible at prices $(p_1, p_{2,\infty}) = (1,1)$.
\end{example}

\begin{figure}[h]
\label{fig:consumer}
\includegraphics[scale=0.55]{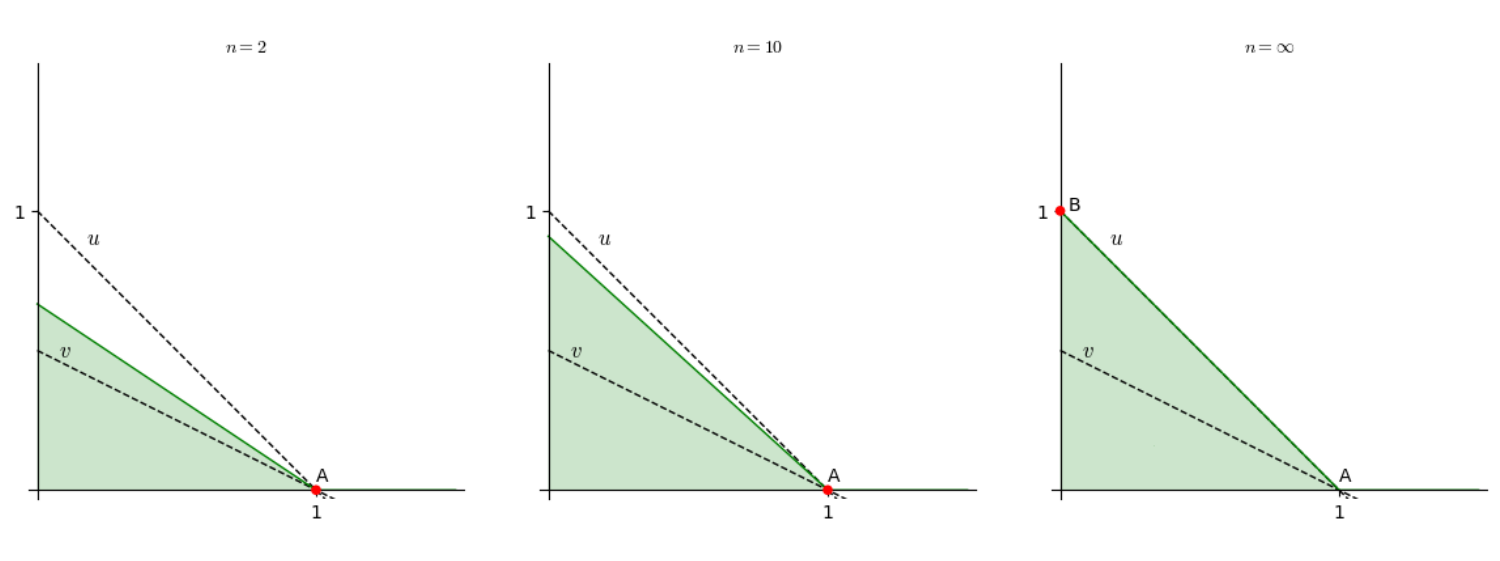}
\caption{Ann's problem for $n=2,10,\infty$.}
\end{figure}

The example above is fairly simple and suggests that, when weak preferences are continuous but incomplete, we should not be surprised to find sequences of maximal elements that converge to suboptimal alternatives. With this observation in mind, the main technical contributions of the present paper are: 1) to provide conditions which ensure that limits of optimal choices are optimal in the limit problem, and 2) to shed light on how the nature of preference incompleteness may interfere with the preservation of optimality when taking limits.

The rest of the paper is organized as follows. We present basic definitions in Section~\ref{section:prelims}. We state and prove a simple Maximum Theorem for incomplete preferences in Section~\ref{section:simple_MT}. This result is somewhat restrictive since it requires a fixed preference and convex feasible sets. In Section~\ref{section:MT}, we establish a more general Maximum Theorem based on a new continuity condition for the domains of comparability of the preferences involved. In Section~\ref{section:necessity}, we investigate assumptions under which this new condition is not only sufficient but also necessary. Finally, Section~\ref{section:literature} briefly relates our results to the existing literature and Section~\ref{section:discussion} offers some concluding remarks. We relegate all proofs to Appendix~\ref{section:proofs}.

\section{Preliminaries}
\label{section:prelims}
Let $(X, d)$ be a metric space. In this paper, a \textit{preference}, generically denoted by $\succsim$, is a reflexive and transitive binary relation on $X$. As usual, $\sim$ and $\succ$ denote the symmetric and asymmetric parts of $\succsim$, respectively. The  \textit{indifference classes} of $\succsim$ in $A \subseteq X$ are the sets of the form $\left\{y\in A\middle| y \sim x\right\}$ for some $x \in A$.

$\succsim$ is \textit{complete on} $A \subseteq X$ if either $x \succsim y$ or $y \succsim x$ holds for all $x,y\in A$ ($\succsim$ is \textit{complete} if it is complete on $X$). The set $A$ is a $\succsim$-\textit{domain} if $\succsim$ is complete on $A$. If $A \subseteq B \subseteq X$ and $A$ is a $\succsim$-domain such that no $\succsim$-domain contained in $B$ strictly contains $A$, then $A$ is a \textit{maximal $\succsim$-domain relative to $B$}. Denote by $\mathcal{D}\left(\succsim, B\right)$ the collection of all maximal $\succsim$-domains relative to $B$.

The alternative $x \in A$ is $\succsim$\textit{-maximal in} $A$ if, for every $y \in A$, $y \succsim x$ implies $x \succsim y$. The set of all alternatives that are $\succsim$-maximal in $A$ is denoted by $\mathcal{M}ax(\succsim, A)$. Analogously, $x \in A$ is $\succsim$\textit{-minimal in} $A$ if, for every $y \in A$, $x \succsim y$ implies $y \succsim x$. The set of all $\succsim$-minimal alternatives in $A$ is denoted by $\mathcal{M}in(\succsim, A)$.

A preference is \textit{continuous} if it is a closed subset of $X \times X$. Let $\mathcal{P}$ be the collection of continuous preferences on $X$. A set $\mathcal{U} \subseteq \mathbb{R}^X$ is a \textit{multi-utility representation} for preference $\succsim$ whenever, for every $x, y \in X$, $x \succsim y$ holds if and only if $u(x) \geq u(y)$ for all $u \in \mathcal{U}$.

Let $\mathcal{K}_X$ be the collection of nonempty compact subsets of $X$. Consider both $\mathcal{K}_X$ and $\mathcal{P}$ equipped with the Hausdorff metric topology derived from $X$ and $X \times X$, respectively. 
For any sequence $\left\{\mathcal{A}_n\right\}_{n \in \mathbb{N}}$ of nonempty subsets of $\mathcal{K}_X$, denote by $LS_{n \to +\infty} \mathcal{A}_n$ the collection of accumulation points of all sequences $\left\{A_n\right\}_{n \in \mathbb{N}}$, where $A_n \in \mathcal{A}_n$ for each $n \in \mathbb{N}$.

\section{A Simple Maximum Theorem}
\label{section:simple_MT}

In this section, we introduce a new continuity condition that is compatible with interesting classes of incomplete preferences and allows us to prove a simple Maximum Theorem. Throughout this section, we will assume that $X$ is convex.

\begin{definition}
A preference $\succsim$ is \textit{midpoint continuous} if, for every $x,y \in X$ satisfying $y \succ x$, there exists $\alpha \in [0,1)$ and open sets $V, W \subseteq X$ such that $\alpha x+(1-\alpha)y \in V$, $x \in W$, and $z' \succ x'$ for every $(z',x') \in V \times W$.
\end{definition}

Midpoint continuity is a substantive restriction on preferences. For instance, the standard vector order in $\mathbb{R}^2$ is not midpoint continuous. This does not mean that midpoint continuity is too restrictive for our purposes. After all, the standard vector order $\geq$ in $\mathbb{R}^2$ also violates the conclusion of the Maximum Theorem.\footnote{It is easy to construct a convergent sequence $\left(K_n, x_n\right)_{n \in \mathbb{N}}$, where, for each $n \in \mathbb{N}$, $K_n$ is a nonempty, convex, and compact subset of $\mathbb{R}^2$ and $x_n \in \mathcal{M}\left(\geq, K_n\right)$, with the property that $\lim_{n \to +\infty} x_n \not \in \mathcal{M}\left(\geq, \lim_{n\to+\infty}K_n\right)$.} 

In fact, for complete preferences, midpoint continuity is strictly weaker than continuity. To see that continuity implies midpoint continuity, note that the asymmetric part of a complete and continuous preference is open in $X \times X$. As a result, we can always take $\alpha = 0$ to satisfy the definition of midpoint continuity. To see that the converse implication does not hold, it suffices to consider the preference on $X=[0,1]$ represented by the utility function
\[
u(x) =
\begin{cases}
x & x \in [0,1/2)\\
2-x & x \in [1/2,1],
\end{cases}
\]
which is midpoint continuous but not continuous.

Without completeness, midpoint continuity and continuity are independent axioms. The incomplete preference on $X=[0,1]$ given by $\succsim \hspace{2pt} := [0,1)^2\cup \{(1,1)\}$ is midpoint continuous but not continuous. Moreover, both the natural vector order and the preference in Example~\ref{ex:linear_consumer} are continuous but not midpoint continuous.

The following result establishes that a significant class of incomplete preferences satisfies midpoint continuity:

\begin{proposition}
\label{prop:midpoint}
If $\succsim$ admits a finite multi-utility representation $\mathcal{U} \subseteq \mathbb{R}^X$ in which each $u \in \mathcal{U}$ is continuous and strictly quasiconcave, then $\succsim$ is midpoint continuous.
\end{proposition}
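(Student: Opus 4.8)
The plan is to exploit strict quasiconcavity to push the midpoint of $x$ and $y$ strictly above $x$ according to every utility in the representation, and then to use continuity together with the finiteness of $\mathcal{U}$ to build the required product neighborhood.

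First I would fix $x,y\in X$ with $y\succ x$, noting that reflexivity forces $x\ne y$. Writing $\mathcal{U}=\{u_1,\dots,u_k\}$, the relation $y\succsim x$ gives $u_i(y)\ge u_i(x)$ for every $i$, so $\min\{u_i(x),u_i(y)\}=u_i(x)$. I would then take $\alpha=\tfrac12$ and set $m:=\tfrac12 x+\tfrac12 y$, which lies in $X$ by convexity. Strict quasiconcavity of each $u_i$ upgrades the weak inequality to $u_i(m)>\min\{u_i(x),u_i(y)\}=u_i(x)$ for every $i$, so $\varepsilon:=\min_{1\le i\le k}\bigl(u_i(m)-u_i(x)\bigr)>0$. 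This strict positivity is precisely where finiteness of $\mathcal{U}$ is used: for an infinite family the analogous infimum could vanish.

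Next I would invoke continuity of the $u_i$: choose an open set $V\ni m$ with $u_i(z')>u_i(m)-\varepsilon/2$ for all $z'\in V$ and all $i$, and an open set $W\ni x$ with $u_i(x')<u_i(x)+\varepsilon/2$ for all $x'\in W$ and all $i$ (each obtained as a finite intersection of single-index neighborhoods). Then for any $(z',x')\in V\times W$ and any $i$ we get $u_i(z')>u_i(m)-\varepsilon/2\ge u_i(x)+\varepsilon/2>u_i(x')$, hence $u_i(z')>u_i(x')$ for all $i$; by the multi-utility representation this yields $z'\succsim x'$ while $x'\succsim z'$ fails, i.e.\ $z'\succ x'$. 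Since $\bigl(\alpha x+(1-\alpha)y,\,x\bigr)=(m,x)\in V\times W$ and $\alpha=\tfrac12\in[0,1)$, midpoint continuity holds.

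I do not anticipate a genuine obstacle here; the one point that needs care is the observation that strict quasiconcavity produces a \emph{strict} inequality at the midpoint even when $u_i(x)=u_i(y)$, which is exactly what creates the uniform ``slack'' $\varepsilon$ that lets one separate the values of $z'$ and $x'$ by open neighborhoods — and finiteness of $\mathcal{U}$ is what makes that slack uniform across the representation.
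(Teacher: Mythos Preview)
Your argument is correct and follows essentially the same route as the paper's proof: both take the midpoint $\alpha=1/2$, use strict quasiconcavity to obtain $u_i(m)>u_i(x)$ for every $i$, and then invoke continuity together with finiteness of $\mathcal{U}$ to build the product neighborhood $V\times W$ on which the strict preference persists. The only cosmetic difference is that you quantify the slack explicitly via $\varepsilon=\min_i\bigl(u_i(m)-u_i(x)\bigr)$, whereas the paper intersects the per-utility neighborhoods $V_u\times W_u$ directly; the content is the same.
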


The requirement that $\succsim$ admits a \textit{finite} multi-utility representation in Proposition~\ref{prop:midpoint} cannot be dispensed with, for there are preferences admitting a countable multi-utility representation satisfying the remaining assumptions in Proposition~\ref{prop:midpoint} but violating midpoint continuity. 

Using the concept of midpoint continuity, we can establish the first major result of this paper, a simple Maximum Theorem:

\begin{theorem}
\label{th:simple_maximum}
Let $\succsim$ be a midpoint continuous preference and let $\left\{\left(K_n, x_n\right)\right\}_{n \in\mathbb{N}}$ be a sequence in $\mathcal{K}_X \times X$ such that
\begin{enumerate}
\item $\left\{\left(K_n, x_n\right)\right\}_{n \in \mathbb{N}}$ converges to $\left(K, x\right) \in \mathcal{K}_X \times X$ as $n\to +\infty$.
\item $x_n \in \mathcal{M}ax\left(\succsim, K_n\right)$ for every $n \in \mathbb{N}$.
\item $K_n$ is convex for every $n \in \mathbb{N}$.
\end{enumerate}
Then, $x \in \mathcal{M}ax\left(\succsim, K\right)$.
\end{theorem}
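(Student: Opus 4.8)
The plan is to argue by contradiction, using convexity of the feasible sets to transport a witness of non-maximality in the limit problem $K$ back into the approximating problems $K_n$. Two preliminary observations are needed. First, since $x_n \in K_n$ for all $n$ and $(K_n,x_n) \to (K,x)$ with $\mathcal{K}_X$ carrying the Hausdorff topology, the limit $x$ belongs to $K$ (Hausdorff limits of compact sets absorb limits of selections). Second, suppose toward a contradiction that $x \notin \mathcal{M}ax(\succsim,K)$; then, by the definition of maximality, there is some $y \in K$ with $y \succsim x$ but $x \not\succsim y$, i.e.\ $y \succ x$.

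Now I would invoke midpoint continuity at the pair $(x,y)$: it yields $\alpha \in [0,1)$ and open sets $V,W \subseteq X$ with $z := \alpha x + (1-\alpha)y \in V$, $x \in W$, and $z' \succ x'$ for every $(z',x') \in V\times W$. Because $K_n \to K$ in the Hausdorff metric and $y \in K$, I can pick $y_n \in K_n$ (for instance, a nearest point of $K_n$ to $y$, which exists by compactness of $K_n$) with $y_n \to y$. Set $z_n := \alpha x_n + (1-\alpha)y_n$. Since $x_n,y_n \in K_n$ and $K_n$ is convex, $z_n \in K_n$; and since $x_n \to x$, $y_n \to y$ with $\alpha$ fixed, $z_n \to \alpha x + (1-\alpha)y = z$.

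To conclude: as $z_n \to z \in V$ and $x_n \to x \in W$ with $V,W$ open, we have $(z_n,x_n) \in V\times W$ for all sufficiently large $n$, hence $z_n \succ x_n$ while $z_n \in K_n$ --- contradicting $x_n \in \mathcal{M}ax(\succsim,K_n)$. The only mildly delicate points are the two Hausdorff-convergence steps (obtaining $x \in K$ and the selection $y_n \to y$), which are standard; the real content is simply that midpoint continuity lets an open ``strictly better than $x$'' box around $z$ be reached by the admissible convex combinations $z_n$. I would also remark in passing that the argument never invokes closedness of $\succsim$, so the hypothesis $\succsim \in \mathcal{P}$ is in fact more than is needed for this particular statement, with all the work done by midpoint continuity together with the convexity of the $K_n$'s.
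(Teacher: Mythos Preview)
Your proof is correct and follows the same contradiction strategy as the paper's, with only a minor variation in how the dominating element of $K_n$ is produced: the paper first observes that $K$, as a Hausdorff limit of convex sets, is itself convex, hence $z := \alpha x + (1-\alpha)y \in K$, and then picks some $z_n \in K_n \cap V$ directly from Hausdorff convergence; you instead select $y_n \in K_n$ with $y_n \to y$ and form $z_n := \alpha x_n + (1-\alpha)y_n$ inside $K_n$, which sidesteps the need to argue that $K$ is convex. Your closing observation that closedness of $\succsim$ is never invoked is also accurate.
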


As we mentioned above, there are examples of incomplete preferences that are continuous and fail to satisfy a Maximum Theorem. It follows that midpoint continuity cannot be substituted for continuity in Theorem~\ref{th:simple_maximum}.

In the next example, we apply Proposition~\ref{prop:midpoint} and Theorem~\ref{th:simple_maximum} to establish the upper hemicontinuity of the set of Pareto efficient allocations with respect to the agent's endowments.

\begin{example}
Consider an environment with $L \in \mathbb{N}$ goods and $N \in \mathbb{N}$ agents. Here $X =\mathbb{R}_+^{L \times N}$ denotes the set of all conceivable allocations or social outcomes. Each agent has a continuous and strictly convex preference on $\mathbb{R}_+^L$.\footnote{A preference is said to be \textit{strictly convex} if, for every $x,y,z \in X$, $y \succsim x$, $z \succsim x$, $y \ne z$, and $\alpha \in (0,1)$ imply $\alpha y + (1-\alpha) z \succ x$. See \citet*{MWG1995}, p. 44.} The Pareto relation for this environment admits a finite multi-utility representation composed by continuous and strictly quasiconcave utilities. It thus follows from Proposition~\ref{prop:midpoint} that the Pareto relation on $X$ is midpoint continuous. Now consider a sequence of endowments $\left\{\omega_n\right\}_{n \in \mathbb{N}}$ that converges to $\omega$ and a sequence of allocations $\left\{x_n\right\}_{n \in \mathbb{N}}$ that converges to an allocation $x$. Denote by $\mathcal{E}_n$ and $\mathcal{E}$ the exchange economies associated by endowments $\omega_n$ and $\omega$, respectively. In this environment, Theorem~\ref{th:simple_maximum} directly implies that, if each allocation $x_n$ is Pareto efficient in the exchange economy $\mathcal{E}_n$, then allocation $x$ is Pareto efficient in the exchange economy $\mathcal{E}$.
\end{example}

It is important to emphasize that Theorem~\ref{th:simple_maximum} assumes a fixed preference $\succsim$. Relaxing this requirement (\textit{i.e.}, allowing for a sequence of preferences converging to $\succsim$) would significantly broaden the applicability of the result. Unfortunately, the following example shows that Theorem~\ref{th:simple_maximum} becomes false with this modification.

\begin{example}
Let $X=[0,1]$ and let $\succsim_n$ be represented by $\mathcal{U}_n := \left\{u, v_n\right\}$, where $u(x)=x$ and $v_n(x)=\left(x-\frac{n+1}{2n}\right)^2$. It is easy to verify that $\left\{\succsim_n\right\}_{n \in \mathbb{N}}$ converges to $\succsim$, the preference represented by $\mathcal{U} := \left\{u, v\right\}$, where $v(x) = \left(x-\frac{1}{2}\right)^2$. Moreover, by Proposition \ref{prop:midpoint}, every preference considered is midpoint continuous. However, even though $0 \in \mathcal{M}ax\left(\succsim_n, X\right)$ for each $n \in \mathbb{N}$, $0 \not\in \mathcal{M}ax\left(\succsim, X\right)$.
\end{example}

\section{A General Maximum Theorem}
\label{section:MT}

Theorem~\ref{th:simple_maximum} assumes a fixed preference and convex feasible sets. The second major result of this paper replaces these requirements with a continuity condition on maximal domains of comparability:

\begin{theorem}
\label{th:maximum}
Let $\left\{\left(\succsim_n, K_n, x_n\right)\right\}_{n \in\mathbb{N}}$ be a sequence in $\mathcal{P}\times \mathcal{K}_X \times X$ such that
\begin{enumerate}
\item $\left\{\left(\succsim_n, K_n, x_n\right)\right\}_{n \in \mathbb{N}}$ converges to $\left(\succsim, K, x\right) \in \mathcal{P}\times \mathcal{K}_X \times X$ as $n\to +\infty$.
\item $x_n \in \mathcal{M}ax\left(\succsim_n, K_n\right)$ for every $n \in \mathbb{N}$.
\item $LS_{n \to +\infty} \mathcal{D}\left(\succsim_n, K_n\right) \subseteq \mathcal{D}\left(\succsim, K\right)$.
\end{enumerate}
Then, $x \in \mathcal{M}ax\left(\succsim, K\right)$.
\end{theorem}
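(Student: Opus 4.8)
The plan is to prove the statement directly: fixing an arbitrary $y \in K$ with $y \succsim x$, I will show $x \succsim y$, which is exactly what $x \in \mathcal{M}ax(\succsim,K)$ requires. The idea is to manufacture, out of the maximal elements $x_n$, a maximal $\succsim$-domain $D$ relative to $K$ through which both the maximality $x_n \in \mathcal{M}ax(\succsim_n,K_n)$ and hypothesis (3) can be transferred to the limit.

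First I would, for each $n$, use Zorn's Lemma to pick a maximal $\succsim_n$-domain $D_n$ relative to $K_n$ containing $x_n$ (the singleton $\{x_n\}$ is already a $\succsim_n$-domain inside $K_n$, and the union of a chain of domains is a domain, so such a $D_n$ exists). An auxiliary lemma — which I would state and prove in the appendix — records that a maximal domain relative to a compact set is itself compact: if $a$ lies in its closure, then by continuity (closedness) of the preference $a$ is comparable with every element of the domain, so adjoining $a$ again yields a domain, and maximality forces $a$ to already be in it. Hence $D_n \in \mathcal{K}_X$. Because $K_n \to K$ with $D_n \subseteq K_n$, the $D_n$ eventually lie in arbitrarily small neighbourhoods of the compact set $K$, and a compactness argument (also relegated to the appendix) then extracts a subsequence with $D_{n_k} \to D$ in $\mathcal{K}_X$; any such limit satisfies $D \subseteq K$ since its points are approximated by points of $K_n$. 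Hypothesis (3) now gives $D \in LS_{n\to+\infty}\mathcal{D}(\succsim_n,K_n)\subseteq\mathcal{D}(\succsim,K)$, so $\succsim$ is complete on $D$ and $D$ is a maximal $\succsim$-domain relative to $K$; and $x_{n_k}\to x$ with $x_{n_k}\in D_{n_k}$ yields $x\in D$.

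Next I would show that $x$ is a greatest element of $D$ under $\succsim$ — the point being that on a domain, maximality coincides with being a greatest element. Fix $z\in D$ and pick $z_{n_k}\in D_{n_k}$ with $z_{n_k}\to z$. Since $\succsim_{n_k}$ is complete on $D_{n_k}$, the pair $x_{n_k},z_{n_k}$ is $\succsim_{n_k}$-comparable; either $x_{n_k}\succsim_{n_k}z_{n_k}$ directly, or $z_{n_k}\succsim_{n_k}x_{n_k}$, in which case $\succsim_{n_k}$-maximality of $x_{n_k}$ in $K_{n_k}\supseteq D_{n_k}$ again yields $x_{n_k}\succsim_{n_k}z_{n_k}$. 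Either way $(x_{n_k},z_{n_k})\in\;\succsim_{n_k}$ for all $k$, and since $\succsim_{n_k}\to\;\succsim$ as closed subsets of $X\times X$ and $(x_{n_k},z_{n_k})\to(x,z)$, we get $(x,z)\in\;\succsim$, i.e.\ $x\succsim z$; this holds for every $z\in D$. Finally, with $y\in K$ and $y\succsim x$ fixed at the outset, transitivity gives $y\succsim z$ for every $z\in D$, so $D\cup\{y\}$ is a $\succsim$-domain contained in $K$; maximality of $D$ then forces $y\in D$, whence $x\succsim y$ by the previous step. Since $y$ was an arbitrary element of $K$ with $y\succsim x$, we conclude $x\in\mathcal{M}ax(\succsim,K)$.

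I expect the only genuine obstacle to be the compactness step in the construction of $D$: one must guarantee that the sequence $(D_n)$ of maximal domains actually has a Hausdorff-convergent subsequence (otherwise hypothesis (3) has nothing to act on) and that its limit remains inside $X$ and inside $K$. This hinges on the fact that a neighbourhood of a compact set is totally bounded and that $K_n \to K$ confines the $D_n$ near $K$; if $X$ is not complete one first passes to its completion, where the relevant hyperspace is compact, extracts the subsequence there, and then observes that the limit lies in $K$, hence back in $X$. Everything else — the closed-graph passage to the limit in the second part and the one-line domain-extension argument at the end — is routine given the continuity of the preferences and the definition of a maximal domain.
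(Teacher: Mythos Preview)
Your proof is correct and follows essentially the same strategy as the paper: choose maximal domains $D_n \ni x_n$, extract a Hausdorff-convergent subsequence, use hypothesis (3) to certify the limit $D$ as a maximal $\succsim$-domain relative to $K$, show $x$ is $\succsim$-best in $D$, and conclude maximality in $K$. The only differences are stylistic: you prove ``$x$ is best in $D$'' directly rather than by contradiction, and you give a self-contained domain-extension argument ($D\cup\{y\}$ is a domain, so $y\in D$) for the final step where the paper instead invokes Theorem~1 of \citet*{Gorno2018}.
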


Theorem~\ref{th:maximum} obtains an upper-hemicontinuity property similar to that of the $\arg\max$ correspondence in Berge's Maximum Theorem but using condition (3) instead of assuming the existence of a utility representation (which is significantly stronger). Roughly, condition (3) requires all limits of maximal $\succsim_n$-domains to be maximal $\succsim$-domains, relative to the relevant feasible sets. In the particular case in which every preference in the sequence $\left\{\succsim_n\right\}_{n \in \mathbb{N}}$ is complete, the limit preference $\succsim$ must also be complete and, thus, condition (3) holds trivially. However, condition (3) is also compatible with incomplete preferences, as the following examples illustrate.

\begin{example}
\label{ex:equality}
Let $X=[0,1]$ and consider the ``diagonal'' relation $\succsim \hspace{2pt} := \left\{(x,x)\middle| 0 \leq x \leq 1\right\}$. Then, condition (3) holds for every convergent sequence $\left\{K_n\right\}_{n \in \mathbb{N}}$.
\end{example}

\begin{example}
\label{ex:moving}
Let $X=[-1,1]$ and consider the sequence of preferences $\left\{\succsim_n\right\}_{n \in \mathbb{N}}$ given by
\[
\succsim_n \hspace{2pt} =\left\{(x,y) \in X \times X\middle| x \geq y \geq 1/n \vee 1/n \geq y \geq x\right\},
\]
its limit
\[
\succsim \hspace{2pt} =\left\{(x,y) \in X \times X\middle| x \geq y \geq 0 \vee 0 \geq y \geq x\right\},
\]
and a sequence $\left\{K_n\right\}_{n \in \mathbb{N}}$ in $\mathcal{K}_X$ converging to some $K \in \mathcal{K}_X$. The corresponding maximal domains are $\mathcal{D}\left(\succsim_n, K_n\right)=\left\{K_n\cap [-1,1/n], K_n\cap [1/n,1]\right\} \setminus \{\emptyset\}$ and $\mathcal{D}\left(\succsim, K\right)=\left\{K\cap [-1,0], K\cap [0,1]\right\} \setminus \{\emptyset\}$. Assuming that $0$ is not in the boundary of $K$, the sequences $\left\{K_n\cap [-1,1/n]\right\}_{n \in \mathbb{N}}$ and $\left\{K_n\cap [1/n,1]\right\}_{n \in \mathbb{N}}$ converge to $K\cap [-1,0]$ and $K\cap [0,1]$,  respectively (while these limits are not entirely trivial and do not hold in general, we omit the details of the convergence argument for the sake of brevity). Thus, provided that $0$ is not in the boundary of $K$, condition (3) holds.
\end{example}

\begin{example}
\label{ex:partition}
Suppose there is a finite partition $\mathcal{D}^*$ of $X$ such that $\mathcal{D}\left(\succsim_n, X\right) = \mathcal{D}^*$ for all $n \in \mathbb{N}$. Note that this assumption nests the case of complete preferences as the particular case in which $\mathcal{D}^*=\{X\}$. Convergence of preferences implies that $\mathcal{D}\left(\succsim, X\right) = \mathcal{D}^*$ as well. Moreover, since all maximal domains relative to $X$ are disjoint, we also have $\mathcal{D}\left(\succsim, K\right) = \left\{D \cap K\middle| D\in \mathcal{D}^*\right\}$ and $\mathcal{D}\left(\succsim_n, K_n\right) = \left\{D \cap K_n\middle| D\in \mathcal{D}^*\right\}$ for every $n \in \mathbb{N}$. We conclude that $LS_{n \to +\infty} \mathcal{D}\left(\succsim_n, K_n\right) \subseteq \mathcal{D}\left(\succsim, K\right)$ and condition (3) holds.
\end{example}

\section{A Characterization}
\label{section:necessity}

Even though condition (3) in Theorem~\ref{th:maximum} constitutes a general sufficient condition for optimality to be preserved by limits, it is not necessary:

\begin{example}
\label{ex:not_necessary}
Let $X = [0, 1]$. Consider the following preference
\[
\succsim \hspace{2pt} = \left\{(x,y) \in [0,0.5)^2 \middle| x=y\right\} \cup [0.5,1]^2.
\]
Consider the sequence $\left\{K_n\right\}_{n \in \mathbb{N}}$, where $K_n := \left[0.5-0.5/n, 1\right]$ for each $n \in \mathbb{N}$. On the one hand, $D_n = \{0.5-0.5/n\}$ is a maximal $\succsim$-domain relative to $K_n$, while the sequence $\left\{D_n\right\}_{n \in \mathbb{N}}$ converges to $D = \{0.5\}$, which is not a maximal $\succsim$-domain relative to $K := \lim_{n \to +\infty} K_n = [0.5,1]$. On the other hand, $\mathcal{M}ax\left(\succsim_n, K_n\right) = \mathcal{M}in\left(\succsim_n, K_n\right) = K_n$ for all $n\in \mathbb{N}$ and $\mathcal{M}ax\left(\succsim, K\right) = \mathcal{M}in\left(\succsim, K\right)=K$, so all convergent sequences composed by $\succsim$-maximal and $\succsim$-minimal elements in each $K_n$ converge to $\succsim$-maximal and $\succsim$-minimal elements in $K$.
\end{example}

However, condition (3) is indeed necessary and sufficient for maximal and minimal elements to be preserved by limits in more specific settings. In this section, we obtain a characterization by restricting attention to limit preferences that are antisymmetric (\textit{i.e.}, partial orders) and sets that are ``order dense".

Formally, a set $A \subseteq X$ is $\succsim$-\textit{dense} if, for every $x,y \in A$, $x \succ y$ implies that there exists $z \in A$ such that $x \succ z \succ y$. $\succsim$-dense sets are quite common in applications. For instance, if $\succsim$ is a preference over lotteries that admits an expected multi-utility representation\footnote{\citet*{Dubra2004} show that a preference over lotteries has an expected multi-utility representation if and only if it is continuous and satisfies the independence axiom.}, then every convex set of lotteries is $\succsim$-dense. We can now state the main result of this section.

\begin{theorem}
\label{th:full_maximum}
Denote by $\mathcal{G} \subseteq \mathcal{P}$ the collection of continuous partial orders on $X$. Let $\left\{\left(\succsim_n, K_n\right)\right\}_{n \in\mathbb{N}}$ be a converging sequence in $\mathcal{P} \times \mathcal{K}_X$ with limit $\left(\succsim, K\right) \in \mathcal{G} \times \mathcal{K}_X$ and such that, for every $n \in \mathbb{N}$, $K_n$ is a $\succsim_n$-dense set and all indifference classes of $\succsim_n$ in $K_n$ are connected. Then, $K$ is $\succsim$-dense. Moreover, the following are equivalent:
\begin{enumerate}
\item $LS_{n \to +\infty} \mathcal{D}\left(\succsim_n, K_n\right) \subseteq \mathcal{D}\left(\succsim, K\right)$
\item $LS_{n \to +\infty} \mathcal{M}ax(\succsim_n, K_n) \subseteq \mathcal{M}ax(\succsim, K)$ and \\$LS_{n \to +\infty} \mathcal{M}in(\succsim_n, K_n) \subseteq {\mathcal{M}in(\succsim, K)}$.
\end{enumerate}
\end{theorem}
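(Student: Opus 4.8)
The plan is to prove the two halves of the statement separately: first that $K$ is $\succsim$-dense, then the equivalence (1) $\Leftrightarrow$ (2). For the density claim, I would take $x, y \in K$ with $x \succ y$ and use convergence of $K_n$ to produce sequences $x_n, y_n \in K_n$ with $x_n \to x$, $y_n \to y$. Since $\succsim_n \to \succsim$ and $x \succ y$, the second part of Lemma~\ref{lemma:Hausdorff_facts} should give $x_n \succ_n y_n$ for large $n$. Then $\succsim_n$-density of $K_n$ yields $z_n \in K_n$ with $x_n \succ_n z_n \succ_n y_n$; after passing to a convergent subsequence (using compactness of $K_n \to K$ via Lemma~\ref{lemma:subsequence}) we get $z_n \to z \in K$, and closedness of $\succsim$ gives $x \succsim z \succsim y$. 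The antisymmetry of $\succsim$ (it is a partial order) is what upgrades these weak relations to $x \succ z \succ y$: if $z \sim x$ or $z \sim y$ we would get $x = z$ or $z = y$, contradicting $x_n \succ_n z_n \succ_n y_n$ in the limit once we note $x \ne y$. Here I may need to argue that $z \notin \{x, y\}$ directly, perhaps using connectedness of the indifference classes and the fact that $z_n$ is strictly between $x_n$ and $y_n$.

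For the equivalence, the direction (1) $\Rightarrow$ (2) is essentially Theorem~\ref{th:maximum} applied twice: once to $\succsim_n$ for maximality, and once to the reversed preferences $\succsim_n^{-1}$ (which are also continuous preferences, converge to $\succsim^{-1}$, and have the same maximal domains) for minimality, noting that $\mathcal{M}in(\succsim_n, K_n) = \mathcal{M}ax(\succsim_n^{-1}, K_n)$ and $\mathcal{D}(\succsim_n^{-1}, K_n) = \mathcal{D}(\succsim_n, K_n)$. So the real content is (2) $\Rightarrow$ (1).

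For (2) $\Rightarrow$ (1), I would argue by contraposition. Suppose $D \in LS_{n \to +\infty} \mathcal{D}(\succsim_n, K_n) \setminus \mathcal{D}(\succsim, K)$, so there is a subsequence along which $D_n \in \mathcal{D}(\succsim_n, K_n)$ with $D_n \to D$. The set $D$ is still a $\succsim$-domain (a limit of $\succsim_n$-domains, using closedness of $\succsim$ and $\succsim_n \to \succsim$), so the failure $D \notin \mathcal{D}(\succsim, K)$ means $D$ is not a \emph{maximal} $\succsim$-domain relative to $K$: there is $w \in K \setminus D$ that is $\succsim$-comparable to every point of $D$. Because $\succsim$ is a partial order and $K$ is $\succsim$-dense, I expect the structure of a maximal domain relative to $K$ to be an order interval (or a union controlled by its sup and inf in $D$); the idea is that $w$ must sit strictly above every element of $D$ or strictly below every element of $D$ — otherwise $w$ would fall between two elements of $D$ and, by $\succsim$-density plus the chain structure, would already belong to $D$. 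Say $w \succ z$ for all $z \in D$ (the other case being symmetric via $\succsim^{-1}$). Then in particular the $\succsim$-maximal element of $D$ — which exists because $D$ is a compact chain under a continuous order — is not $\succsim$-maximal in $K$. Pulling this back: the $\succsim_n$-maximal element $m_n$ of $D_n$ (which is $\succsim_n$-maximal in all of $K_n$, since $D_n$ is a maximal domain and a chain) converges along a further subsequence to some $m \in D$, and $m \in LS_{n\to+\infty}\mathcal{M}ax(\succsim_n, K_n)$; but $w \succ m$ shows $m \notin \mathcal{M}ax(\succsim, K)$, contradicting (2). The symmetric case uses the $\succsim_n$-minimal element and the $\mathcal{M}in$ half of (2).

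The main obstacle I anticipate is the structural claim in the (2) $\Rightarrow$ (1) step: pinning down exactly why a point $w \in K$ that is comparable to all of a maximal $\succsim$-domain $D$ but not in $D$ must lie entirely above or entirely below $D$, and why $D$ (being a compact $\succsim$-chain relative to a continuous partial order with connected indifference classes) has a greatest and least element. This is where $\succsim$-density, antisymmetry, connectedness of indifference classes, and the results from \citet*{Gorno2018} on the structure of maximal domains all have to be combined carefully — and it is also the step where the preservation of density to $K$ (the first part of the theorem) gets used. The passages to subsequences are routine given Lemmas~\ref{lemma:subsequence} and~\ref{lemma:Hausdorff_facts}, and the minimality/reversal bookkeeping is mechanical once the maximality case is done.
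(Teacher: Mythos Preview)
Your plan for $(1)\Leftrightarrow(2)$ matches the paper's: $(1)\Rightarrow(2)$ via Theorem~\ref{th:maximum} plus reversal, and $(2)\Rightarrow(1)$ by showing that a limit $D$ of maximal domains $D_n$ is a $\succsim$-domain which, if not maximal, admits an exterior bound $w\succ D$ (or $D\succ w$), contradicting (2) through the $\succsim_n$-best (or worst) element of $D_n$. The structural claim you flag as the ``main obstacle'' is precisely Lemma~\ref{lm:maxdomain_fullchar}, and the paper uses it in the same way: first to show each $D_n$ is connected (hence $D$ is connected), then to reduce ``$D$ not maximal'' to ``$D$ has an exterior bound''.

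The density argument, however, has a genuine gap. First, Lemma~\ref{lemma:Hausdorff_facts}(2) does \emph{not} yield $x_n\succ_n y_n$ for large $n$: from $y\not\succsim x$ you get only $y_n\not\succsim_n x_n$ eventually, and nothing forces $x_n\succsim_n y_n$, since your approximants $x_n,y_n\in K_n$ need not be $\succsim_n$-comparable (take the Pareto order on $[-1,1]^2$, $x=(1,0)\succ y=(0,0)$, $x_n=(1,-1/n)$, $y_n=y$). Second---and this is fatal to the single-point approach---even when $x_n\succ_n z_n\succ_n y_n$ holds, the limit of $z_n$ can collapse to $x$ or $y$ (on $[0,1]$ with the usual order take $x_n=1$, $y_n=0$, $z_n=1/n$). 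Your ``perhaps using connectedness'' is the entire difficulty, and it cannot be settled by tracking one point. The paper instead tracks the whole order interval $M_n:=\{z\in K_n: x_n\succsim_n z\succsim_n y_n\}$: the nontrivial step is that $M_n$ is connected (this is exactly where $\succsim_n$-density and connected indifference classes enter), so its limit $M$ is connected; but antisymmetry of $\succsim$ together with the assumed absence of a strict intermediate forces $M=\{x,y\}$, a two-point disconnected set---contradiction.
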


Theorem~\ref{th:full_maximum} provides assumptions under which condition (3) in Theorem~\ref{th:maximum} is necessary and sufficient for all limits of maximal or minimal elements to be maximal or minimal, respectively. Note that condition (2) in Theorem \ref{th:full_maximum} requires that \textit{every} convergent sequence of $\succsim$-maximal (resp. $\succsim$-minimal) elements converges to a $\succsim$-maximal (resp. $\succsim$-minimal) element. Thus, we may find some sequences of maximal elements that converge to a maximal element even when condition (1) is not satisfied.

\begin{example}
Let $\succsim$ be the natural vector order on $X = [0,1]^2$ . $\succsim$ is a continuous partial order and $X$ is $\succsim$-dense. Now, for each $n\in \mathbb{N}$, consider
\[
K_n := \left\{(x_1, x_2) \in X \middle| x_2 \leq n(1-x_1)\right\}.
\]
For each $n \in \mathbb{N}$, $K_n$ is nonempty and compact, and $(1-1/n, 1) \in \mathcal{M}ax(\succsim, K_n)$. Moreover, $\lim_{n \to + \infty} K_n = K := [0,1]^2$ and $(1,1) \in \mathcal{M}ax(\succsim, K)$. However, $[0,1] \times \{0\}$ is a   maximal $\succsim$-domain relative to $K_n$ but not a maximal $\succsim$-domain relative to $K$. It follows from Theorem~\ref{th:full_maximum} that there must be some sequence of maximal elements with a limit that is not maximal. Indeed, $(1, 0) \in \mathcal{M}ax(\succsim, K_n)$ for every $n \in \mathbb{N}$, but $(1,0)  \notin \mathcal{M}ax(\succsim, K)$.
\end{example}

An immediate application of Theorem~\ref{th:full_maximum} is to show that, in the case of a fixed preference, antisymmetry and midpoint continuity combined with mild additional assumptions imply condition (3) in Theorem~\ref{th:maximum}.

\begin{corollary}
\label{cor:midpoint}
If, in addition to the conditions of Theorem~\ref{th:simple_maximum}, $\succsim$ is a partial order and there exists $\underline{x} \in \bigcap_{n \in \mathbb{N}}K_n$ such that $x \succsim \underline{x}$ for every $x \in \bigcup_{n \in \mathbb{N}}K_n$, then condition (3) in Theorem~\ref{th:maximum} holds.
\end{corollary}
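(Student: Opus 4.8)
The plan is to derive the corollary directly from Theorem~\ref{th:full_maximum}. Since the preference is fixed, I would set $\succsim_n \,:=\, \succsim$ for every $n$, so that $\{\succsim_n\}_{n\in\mathbb{N}}$ converges trivially to $\succsim$; combined with the convergence $K_n\to K$ already assumed in Theorem~\ref{th:simple_maximum}, this makes $\{(\succsim_n,K_n)\}_{n\in\mathbb{N}}$ a convergent sequence in $\mathcal{P}\times\mathcal{K}_X$ with limit $(\succsim,K)$. I would then check the hypotheses of Theorem~\ref{th:full_maximum}. The limit lies in $\mathcal{G}\times\mathcal{K}_X$ because $\succsim$ is continuous (being in $\mathcal{P}$) and a partial order by assumption; and since $\succsim$ is antisymmetric, every indifference class of $\succsim$ in $K_n$ is a singleton, hence connected. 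The remaining hypothesis is that each $K_n$ be $\succsim$-dense, and this is the step where convexity of $K_n$ and midpoint continuity of $\succsim$ must be combined: given $x\succ y$ in $K_n$, midpoint continuity yields a point $p=\alpha y+(1-\alpha)x$ on the segment $[y,x]\subseteq K_n$ with $p\succ y$, and one must upgrade this (iterating the condition and using the robustness it supplies via the open sets $V,W$ in its definition) to a point lying strictly between $x$ and $y$ in the order. I expect this $\succsim$-density claim to be the main obstacle, since midpoint continuity by itself only propagates strict preference toward the worse alternative.

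With the hypotheses secured, I would verify condition~(2) of Theorem~\ref{th:full_maximum}. The inclusion for maximal elements, $LS_{n\to+\infty}\mathcal{M}ax(\succsim_n,K_n)\subseteq\mathcal{M}ax(\succsim,K)$, is exactly Theorem~\ref{th:simple_maximum}: if $x_n\in\mathcal{M}ax(\succsim,K_n)$ and $x_n\to x'$ along a subsequence, then $(K_n,x_n)\to(K,x')$ with each $K_n$ convex and $\succsim$ midpoint continuous, so $x'\in\mathcal{M}ax(\succsim,K)$. For minimal elements I would exploit the common lower bound $\underline{x}$. Since $\underline{x}\in\bigcap_n K_n$ and $K_n\to K$ we get $\underline{x}\in K$, and since $x\succsim\underline{x}$ for all $x\in\bigcup_n K_n$, taking limits along sequences in the $K_n$ gives $x\succsim\underline{x}$ for every $x\in K$ as well. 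Antisymmetry then pins down $\mathcal{M}in(\succsim,K_n)=\{\underline{x}\}$ for every $n$ and $\mathcal{M}in(\succsim,K)=\{\underline{x}\}$: any minimal $z$ satisfies $z\succsim\underline{x}$, hence $\underline{x}\succsim z$ by minimality, hence $z=\underline{x}$. Therefore $LS_{n\to+\infty}\mathcal{M}in(\succsim_n,K_n)=\{\underline{x}\}\subseteq\mathcal{M}in(\succsim,K)$, and condition~(2) holds.

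Finally, invoking the equivalence $(1)\Leftrightarrow(2)$ in Theorem~\ref{th:full_maximum} yields condition~(1) of that theorem, which is verbatim condition~(3) of Theorem~\ref{th:maximum}, as desired. Apart from the $\succsim$-density step, the only other points requiring (routine) care are that $\underline{x}\in K$ and that comparisons against $\underline{x}$ survive the passage to the limit, both immediate from continuity of $\succsim$ together with the Hausdorff convergence $K_n\to K$.
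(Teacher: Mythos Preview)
Your plan coincides with the paper's intended route: the corollary is stated there without proof, immediately after Theorem~\ref{th:full_maximum}, as an ``immediate application'' of that theorem. Your verification of condition~(2)---Theorem~\ref{th:simple_maximum} for maxima, and the common least element $\underline{x}$ together with antisymmetry to pin down $\mathcal{M}in(\succsim,K_n)=\{\underline{x}\}=\mathcal{M}in(\succsim,K)$---is exactly right, as is the observation that antisymmetry reduces every indifference class to a singleton and hence to a connected set. The paper supplies no further detail, so at the level of strategy your proposal matches it completely.

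You are also right to isolate $\succsim$-density of each $K_n$ as the one nontrivial hypothesis of Theorem~\ref{th:full_maximum} left to check, and your caution is well placed. Midpoint continuity, as defined, produces from $x\succ y$ a point $p=\alpha y+(1-\alpha)x$ with $p\succ y$ robustly, but gives no handle on the comparison between $x$ and $p$; iterating the condition only propagates strict preference further toward $y$ and never manufactures $x\succ z$ for an intermediate $z$. So the sketch you give does not, by itself, close this gap. A complete argument along this route must either supply an independent proof that convex sets are $\succsim$-dense for midpoint continuous partial orders, or else bypass Theorem~\ref{th:full_maximum} and establish condition~(3) directly---for which the observation that $\underline{x}$ belongs to every maximal $\succsim$-domain relative to each $K_n$ and to $K$ is a natural starting point. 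The paper leaves this step implicit as well.
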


This means that Theorem~\ref{th:maximum} does generalize Theorem~\ref{th:simple_maximum} for partial orders in contexts such as consumer theory (in which consuming zero of every good is always feasible and there is no worse bundle).

A clear limitation of Theorem~\ref{th:full_maximum} is that the limit preference $\succsim$ is required to be a partial order. Example~\ref{ex:not_necessary} above shows that the characterization does not hold without this assumption even if $X=[0,1]$.

Another drawback of Theorem~\ref{th:full_maximum} is that two of its assumptions, namely $\succsim_n$-denseness of the $K_n$ and connectedness of the relative indifference classes, refer to the specific sequence $\left\{\left(\succsim_n, K_n\right)\right\}_{n \in\mathbb{N}}$ under consideration. However, if we restrict attention to convex feasible sets and preferences over lotteries that admit an expected multi-utility representation\footnote{Let $C$ be a separable metric space of consequences and let $X$ be the set of (Borel) probability measures (lotteries) over $C$ equipped with the topology of weak convergence of probability measures. Similarly to \citet*{Dubra2004}, we say that a set $\mathcal{U}$ of bounded continuous functions $C \to \mathbb{R}$ constitutes an \textit{expected multi-utility representation} for $\succsim$ whenever, for every two lotteries $x,y \in X$, $x \succsim y$ is equivalent to $\int_C u(c) dx(c) \geq \int_C u(c)dy(c)$ for all $u \in \mathcal{U}$.}, these assumptions are automatically satisfied for all sequences $\left\{\left(\succsim_n, K_n\right)\right\}_{n \in\mathbb{N}}$. This is the content of the following corollary:

\begin{corollary}
\label{cor:EMU}
Let $X$ be the space of (Borel) probability measures on a separable metric space equipped with the topology of weak convergence. Suppose further that:
\begin{enumerate}
\item For each $n \in \mathbb{N}$, $K_n \in \mathcal{K}_X$ is convex and $\succsim_n$ is a preference that admits an expected multi-utility representation,
\item $\succsim$ is a partial order that admits an expected multi-utility representation.
\end{enumerate}
Then, the equivalence in the conclusion of Theorem~\ref{th:full_maximum} holds.
\end{corollary}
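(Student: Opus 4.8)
The plan is to deduce Corollary~\ref{cor:EMU} from Theorem~\ref{th:full_maximum} by verifying, for an arbitrary sequence $\{(\succsim_n, K_n)\}_{n\in\mathbb{N}}$ satisfying the hypotheses of the corollary, the two sequence-specific assumptions of Theorem~\ref{th:full_maximum}: that each $K_n$ is $\succsim_n$-dense, and that all indifference classes of $\succsim_n$ in $K_n$ are connected. The membership requirements $\succsim \in \mathcal{G}$ and convergence in $\mathcal{P}\times\mathcal{K}_X$ are already part of the corollary's hypotheses (an expected multi-utility representation gives a continuous preference, and $\succsim$ is assumed to be a partial order), so those need only be noted.

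First I would establish $\succsim_n$-denseness of $K_n$. This is essentially the observation already flagged in the text just before Theorem~\ref{th:full_maximum}: if $\succsim_n$ admits an expected multi-utility representation $\mathcal{U}_n$ and $K_n$ is convex, then for lotteries $x, y \in K_n$ with $x \succ_n y$ we have $\int u\, dx \ge \int u\, dy$ for all $u \in \mathcal{U}_n$ with strict inequality for at least one $u$; taking $z_t := t x + (1-t) y$ for $t \in (0,1)$, linearity of the integral in the lottery gives $\int u\, dz_t = t\int u\, dx + (1-t)\int u\, dy$, so $\int u\, dz_t \ge \int u\, dy$ for every $u$ and $> $ for the distinguishing $u$, hence $z_t \succ_n y$; symmetrically $x \succ_n z_t$. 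Since $K_n$ is convex, $z_t \in K_n$, so $K_n$ is $\succsim_n$-dense.

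Next I would show that every indifference class of $\succsim_n$ in $K_n$ is connected. Fix $x \in K_n$ and consider the class $I := \{y \in K_n \mid y \sim_n x\}$. For $y, y' \in I$ and $t \in [0,1]$, linearity of the integrals gives $\int u\, d(ty + (1-t)y') = t\int u\, dy + (1-t)\int u\, dy' = \int u\, dx$ for every $u \in \mathcal{U}_n$, since $\int u\, dy = \int u\, dy' = \int u\, dx$; by the multi-utility representation this means $ty + (1-t)y' \sim_n x$, and convexity of $K_n$ puts this point in $K_n$. Hence $I$ is a convex subset of a topological vector space, therefore connected. Having checked both sequence-specific hypotheses, Theorem~\ref{th:full_maximum} applies verbatim and yields the stated equivalence.

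I do not expect a genuine obstacle here; the corollary is a specialization, and the only mild care needed is in handling the topology-of-weak-convergence setting — specifically, confirming that the map $x \mapsto \int_C u(c)\, dx(c)$ is well-defined and linear on the convex set of lotteries (which it is, since $u$ is bounded and continuous and mixtures of probability measures are taken pointwise on Borel sets), and that convex subsets of this space are connected in the weak topology. The latter follows because the space of lotteries sits inside the locally convex space of finite signed Borel measures with the weak topology, and convex sets in any topological vector space are connected via the continuous path $t \mapsto tx + (1-t)y$. If one wanted to be scrupulous about strictness in the denseness argument, one would note that $x \succ_n y$ together with $x \sim_n z_t$ or $z_t \sim_n y$ would force $x \sim_n y$ by transitivity, contradicting $x \succ_n y$, so the chain $x \succ_n z_t \succ_n y$ is strict as required.
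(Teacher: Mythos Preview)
Your proposal is correct and matches the paper's own argument, which is not written out formally but is indicated in the sentence preceding the corollary: the paper simply asserts that for convex $K_n$ and preferences with an expected multi-utility representation the two sequence-specific hypotheses of Theorem~\ref{th:full_maximum} (that $K_n$ is $\succsim_n$-dense and that indifference classes in $K_n$ are connected) are ``automatically satisfied.'' Your proof spells out exactly these two verifications via linearity of $x\mapsto \int u\,dx$ and convexity of $K_n$, which is precisely the intended reasoning.
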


\section{Related literature}
\label{section:literature}

Considerable work has been devoted to the study of incomplete preferences.\footnote{The list is long. A few examples in chonological order are \citet*{Aumann1962},\citet*{Peleg1970},\citet*{Ok2002}, \citet*{Dubra2004}, \citet*{Eliaz2006}, \citet*{Dubra2011}, \citet*{Evren2011}, \citet*{Ok2012}, \citet*{Evren2014}, \citet*{Riella2015}, \citet*{Gorno2017}, and \citet*{Gorno2018}.}. Despite the fact that continuous weak preferences are one of the two central classes of preferences in this literature, to the best of our knowledge, this is the first paper to provide positive results on the continuity properties of maximal elements for this class.

Continuity of optimal choices is an important problem and has been studied extensively. The central result is the Maximum Theorem in \citet*{Berge1963}, which is concerned with the behavior of value functions and the maximizers that attain them as parameters change continuously. Our results, in particular Theorem~\ref{th:maximum}, extend this work by allowing for incomplete preferences. Even though its key condition is trivially satisfied when preferences are complete, Theorem~\ref{th:maximum} is not truly a generalization of the original Maximum Theorem for three reasons. First, we assume that $X$ is a metric space, whereas Berge's result is proven in a general topological space. Second, since assuming that preferences admit a utility representation would imply completeness, our results focus exclusively on maximal elements and make no statements about value functions. Finally, we equip the space of preferences with the Hausdorff metric topology, which is finer than the topology implied by the continuity of the utility representation with respect to the parameter assumed by the original Maximum Theorem.\footnote{In Berge's framework, it would be natural to say that a sequence of preferences $\left\{\succsim_n\right\}_{n \in \mathbb{R}}$ converges to $\succsim$ if there exists a sequence $\left\{u_n\right\}_{n \in \mathbb{N}}$ such that, for each $n \in \mathbb{N}$, $u_n :X \to \mathbb{R}$ represents $\succsim_n$, and $\left\{u_n\right\}_{n \in \mathbb{N}}$ converges uniformily to a limit $u : X \to \mathbb{R}$ which represents $\succsim$.} For instance, every sequence of (complete) preferences converges to total indifference in that topology but not necessarily in ours.

\citet*{Walker1979} proves a generalized maximum theorem for a strict relation $\succ_\theta$ that depends on a parameter $\theta \in \Theta$ and has open graph (as a correspondence $\Theta \rightrightarrows X \times X$). Lemma 9 in \citet*{Evren2014} shows that if $X$ is a space of lotteries and $\succ$ is open, then $\mathcal{M}ax\left(\succsim,K\right)$ is relatively closed in $K$ and the correspondence $K \rightrightarrows \mathcal{M}ax\left(\succsim,K\right)$ is upper hemicontinuous, even if $K$ is neither convex nor compact. However, neither of these two results bears significance for the class of incomplete continuous weak preferences considered in the present paper. The reason is that, as long as $X$ is connected, every incomplete continuous weak preference that has an open strict part must be trivial (see \citet*{Schmeidler1971}) and, as a result, satisfy $\mathcal{M}ax\left(\succsim, K\right)=K$ for every $K \subseteq X$.

\section{Discussion}
\label{section:discussion}

The present paper provides three major results that expand the scope of Berge's Maximum Theorem to allow for incomplete preferences. Theorem \ref{th:simple_maximum} is based on a simple continuity condition, but its applicability is somewhat limited since it requires convex feasible sets and a fixed preference.

Theorem \ref{th:maximum} does not have the aforementioned limitations. The result depends crucially on its condition (3), a form of upper hemicontinuity of the mapping between preferences-feasible sets pairs and the corresponding collection of maximal domains of comparability. Since \citet*{Gorno2018} shows that every maximal element is the best element in some maximal domain and vice-versa, convergence of maximal domains permits the application of a Berge-type of argument to ensure the convergence of maximal elements through the convergence of local best elements, where the term ``local" here means ``relative to a maximal domain''. 

Finally, Theorem \ref{th:full_maximum} describes a more specific setting in which condition (3) in Theorem \ref{th:maximum} is necessary and sufficient for minimality and maximality to be preserved when taking limits. 

We believe that these results constitute a step forward towards understanding convergence of maximal elements without completeness and open at least three avenues for future research. First, the abstract nature Theorem~\ref{th:maximum} suggests to look for additional sets of assumptions which are sufficient for its condition (3) to hold. Second, the equivalence in Theorem~\ref{th:full_maximum} might be true under weaker assumptions. Third, we currently do not know whether our results remain true in a general topological space (which is the environment in which Berge's Maximum Theorem is formulated).

\appendix

\section{Proofs}
\label{section:proofs}

\begin{proof}[Proof of Proposition~\ref{prop:midpoint}]
Take $x,y \in X$ such that $x \succ y$. By definition of multi-utility representation, we have $u(x)\geq u(y)$ for all $u \in \mathcal{U}$. Define $z:= (1/2) x + (1/2) y$. Since each $u \in \mathcal{U}$ is strictly quasi-concave, $u(z)>u(y)$ holds for all $u\in \mathcal{U}$. For each $u \in \mathcal{U}$, there are open sets $V_u, W_u$ such that $(z,y) \in V_u \times W_u$ and $u(z') > u(y')$ for all $(z',y') \in V_u \times W_u$. Define $V := \bigcap_{u \in \mathcal{U}} V_u$ and $W := \bigcap_{u \in \mathcal{U}} W_u$. Clearly $(z,y) \in V \times W$, so $V$ and $W$ are nonempty. Moreover, since $\mathcal{U}$ is finite, $V$ and $W$ are open. By construction, we have $u(z')>u(y')$ for all $z' \in V$, $y' \in W$, and $u \in \mathcal{U}$. Since $\mathcal{U}$ is a multi-utility representation, $(z',y') \in V \times W$ implies $z' \succ y'$, showing that $\succsim$ satisfies midpoint continuity.
\end{proof}

\begin{proof}[Proof of Theorem~\ref{th:simple_maximum}]
Suppose, seeking a contradiction, that $x \notin \mathcal{M}ax\left(\succsim, K\right)$. Then, there exists $y \in K$ such that $y \succ x$. By midpoint continuity, there must exist $\alpha \in [0,1)$ and open sets $V, W \subseteq X$ satisfying $\left(\alpha x+(1-\alpha)y, x\right) \in V \times W$ and $z' \succ x'$ for all $(z',x') \in V \times W$. Define $z:=\alpha x+(1-\alpha) y$. Note that, being the limit of a sequence of convex sets, $K$ must be convex. It follows that $z \in K$. Since $x = \lim_{n\to+\infty} x_n$, there must be $N_1 \in \mathbb{N}$ such that $x_n \in W$ for all $n \geq N_1$. Since $K = \lim_{n \to +\infty} K_n$ and $z \in K$, there must be $N_2 \in \mathbb{N}$ such that $K_n \cap V \ne \emptyset$ for all $n \geq N_2$. Take $N := \max\{N_1, N_2\}$. Then, for all $n \geq N$, there exists $z_n \in K_n$ such that $z_n \succ x_n$. This implies $x_n \not \in \mathcal{M}ax\left(\succsim, K_n\right)$, a contradiction. We conclude that $x \in \mathcal{M}ax\left(\succsim, K\right)$.
\end{proof}

The proof of Theorem \ref{th:maximum} requires some results about Hausdorff convergence. In the following two lemmas $(M,d)$ is any metric space and $\mathcal{K}_M$ (resp. $\mathcal{F}_M$) is the collection of all nonempty compact (resp. closed) subsets of $M$.

\begin{lemma}
\label{lemma:subsequence}
Let $\{K_n\}_{n \in \mathbb{N}}$ be a convergent sequence in $\mathcal{K}_M$ with limit $K \in \mathcal{K}_M$. Then, every sequence $\{A_n\}_{n \in \mathbb{N}}$ in $\mathcal{K}_M$ such that $A_n \subseteq K_n$ for all $n\in \mathbb{N}$ has a subsequence which converges to a nonempty compact subset of $K$.
\end{lemma}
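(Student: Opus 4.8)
The plan is to reduce the claim to the classical compactness of the hyperspace of a compact metric space. The first step is to show that all the sets in the sequence live inside a single compact set, namely $S := K \cup \bigcup_{n \in \mathbb{N}} K_n$. To see that $S$ is compact it is enough, working in a metric space, to check sequential compactness. Given a sequence $\{y_j\}_{j \in \mathbb{N}}$ in $S$, either some $K_m$ (or $K$ itself) contains infinitely many terms, in which case compactness of that set gives a convergent subsequence with limit in $S$; or, setting $n_j := \min\{n : y_j \in K_n\}$, one has $n_j \to +\infty$. In the latter case $d(y_j, K) \le d_H(K_{n_j}, K) \to 0$, where $d_H$ is the Hausdorff distance, so picking $z_j \in K$ with $d(y_j, z_j) = d(y_j, K)$ and passing to a subsequence along which $\{z_j\}$ converges in the compact set $K$ produces a subsequence of $\{y_j\}$ converging to a point of $K \subseteq S$. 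Hence $S$ is compact and $A_n \subseteq K_n \subseteq S$ for every $n$.

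Since $S$ is a compact metric space, the collection $\mathcal{K}_S$ of its nonempty compact subsets, equipped with the Hausdorff metric, is itself compact (see \citet*{AliprantisBorder2006}). The sequence $\{A_n\}_{n \in \mathbb{N}}$ lies in $\mathcal{K}_S$, so it admits a subsequence $\{A_{n_h}\}_{h \in \mathbb{N}}$ converging in the Hausdorff metric to some $A \in \mathcal{K}_S$; in particular $A$ is a nonempty compact set.

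It remains to verify $A \subseteq K$. Fix $x \in A$. Because $d(x, A_{n_h}) \le d_H(A, A_{n_h}) \to 0$ and each $A_{n_h}$ is compact, there are points $x_h \in A_{n_h}$ with $d(x, x_h) \to 0$; since $x_h \in A_{n_h} \subseteq K_{n_h}$, we also have $d(x_h, K) \le d_H(K_{n_h}, K) \to 0$. As $y \mapsto d(y, K)$ is continuous, $d(x, K) = 0$, and $K$ is closed, so $x \in K$. Thus $A$ is a nonempty compact subset of $K$ to which $\{A_{n_h}\}$ converges, which is what we wanted.

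I expect the only real point of substance to be the first step, the compactness of $S$; the subtlety there is that $M$ is not assumed complete, so the limit points of sequences running through the tail of $\{K_n\}$ must be located inside $K$ using the Hausdorff convergence $K_n \to K$. The remaining two steps are routine invocations of standard facts about the Hausdorff metric.
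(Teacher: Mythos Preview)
Your proof is correct, but it follows a genuinely different route from the paper's. The paper never builds the ambient compact set $S$; instead it works intrinsically in $\mathcal{K}_K$ by projecting each $A_n$ onto a nearest set $B_n \in \arg\min_{\tilde B \in \mathcal{K}_K} d_H(A_n,\tilde B)$, proving the key estimate $d_H(A_n,B_n)\le d_H(K_n,K)\to 0$, extracting a convergent subsequence of $\{B_n\}$ inside the compact hyperspace $\mathcal{K}_K$, and then using the triangle inequality to conclude that the corresponding subsequence of $\{A_n\}$ has the same limit. Both arguments ultimately rest on the same classical fact that $\mathcal{K}_C$ is compact when $C$ is, but yours front-loads the work into the sequential-compactness verification for $S$ (where the Hausdorff convergence $K_n\to K$ is used to trap tail points in $K$), after which the rest is routine; the paper's argument skips that step at the cost of a more delicate distance estimate between $A_n$ and its projection. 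One minor notational wrinkle in your Case~2: $n_j$ is only defined when $y_j\in\bigcup_n K_n$, but since in that case $K$ contains only finitely many terms you may simply discard those and the argument goes through unchanged.
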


\begin{proof}
Let $d^H : \mathcal{K}_M \times \mathcal{K}_M \to \mathbb{R}_+$ denote the Hausdorff distance and let $\mathcal{K}_K$ be the collection of all nonempty compact subsets of $K$. Note that $\left(\mathcal{K}_M, d^H\right)$ is a metric space, $\mathcal{K}_K$ is compact in the (relative) Hausdorff metric topology, and $d^H(A_n, \cdot)$ is continuous on $\mathcal{K}_K$. For each $n \in \mathbb{N}$, let $B_n \in \arg\min_{\tilde{B} \in \mathcal{K}_K} d^H(A_n, \tilde{B})$. I now claim that, for each $n \in \mathbb{N}$, we have
\[
d^H(A_n, B_n) \leq d^H(K_n, K).
\]
To prove this claim, note that, since $\{y\} \in \mathcal{K}_K$ for all $y \in K$, we have
\[
d^H(A_n, B_n) \leq d^H(A_n, \{y\}) = \max_{x \in A_n}d(x,y)
\]
for all $y \in K$. Defining $y^*(x) \in \arg\min_{y \in K} d(x,y)$ for each $x \in A_n$, we have
\[
d^H(A_n, B_n) \leq \max_{x \in A_n} d(x,y^*(x)) = \max_{x \in A_n}\min_{y \in K} d(x,y)\phantom{............}
\]
\[
\leq \max_{x \in K_n}\min_{y \in K} d(x,y) \leq d^H(K_n, K)
\]
as desired. It follows that $\lim_{n \to +\infty} d^H(A_n , B_n) \leq \lim_{n \to +\infty} d^H(K_n , K_n) = 0$.

Since $\mathcal{K}_K$ is compact, the sequence $\{B_n\}_{n \in \mathbb{N}}$ has a convergent subsequence, say $\{B_{n_h}\}_{h \in \mathbb{N}}$. Let $A := \lim_{h \to +\infty} B_{n_h} \in \mathcal{K}_K$. Since $\lim_{n \to +\infty} d^H(A_n , B_n) = 0$ and $\lim_{h\to+\infty} B_{n_h} = A$, the triangle inequality $d^H(A_{n_h}, A) \leq d^H(A_{n_h}, B_{n_h})+d^H(B_{n_h}, A)$ implies $\lim_{h\to +\infty} d^H(A_{n_h}, A) = 0$. This means that $\{A_{n_h}\}_{h \in \mathbb{N}}$ converges to $A$, completing the proof.
\end{proof}

\begin{lemma}
\label{lemma:closed-convergence}
Denote by $\mathcal{F}_M$ the collection of all nonempty closed subsets of $M$ and let $\left\{\left(F_n, x_n\right)\right\}_{n \in \mathbb{N}}$ be a convergent sequence on $\mathcal{F}_M \times M$ with limit $\left(F, x\right) \in \mathcal{F}_M \times M$ and such that $x_n \in F_n$ for every $n \in \mathbb{N}$. 
Then, $x \in F$.
\end{lemma}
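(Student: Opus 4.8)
The plan is to reduce the claim to an elementary distance estimate and then invoke closedness of $F$. Since $F$ is a closed subset of $M$, it suffices to show that $d(x, F) = 0$, where $d(x, F) := \inf_{y \in F} d(x, y)$.

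First I would unpack the hypothesis that $\{F_n\}_{n \in \mathbb{N}}$ converges to $F$ in the Hausdorff metric topology: this means the Hausdorff distance $d^H(F_n, F)$ tends to $0$, and in particular the one-sided excess $\sup_{z \in F_n} d(z, F)$ tends to $0$. (If one prefers to metrize $\mathcal{F}_M$ via the truncated metric $\min\{d, 1\}$ in order to accommodate unbounded closed sets, the same conclusion holds for all $n$ large enough that $d^H(F_n, F) < 1$, which is eventually the case.)

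Next, since $x_n \in F_n$ for every $n$, we have $d(x_n, F) \leq \sup_{z \in F_n} d(z, F) \leq d^H(F_n, F)$, so $d(x_n, F) \to 0$. Combining this with $x_n \to x$ via the triangle inequality $d(x, F) \leq d(x, x_n) + d(x_n, F)$ yields $d(x, F) = 0$. Because $F$ is closed, this forces $x \in F$, as desired. Equivalently, one could argue by contradiction: if $x \notin F$ then $d(x, F) = 2\varepsilon > 0$ for some $\varepsilon$, and for large $n$ both $d(x, x_n) < \varepsilon$ and $d(x_n, F) \leq d^H(F_n, F) < \varepsilon$, contradicting the triangle inequality.

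There is essentially no hard step here. The only point that calls for a line of care is the meaning of the Hausdorff metric topology on $\mathcal{F}_M$ when the sets are unbounded, which is handled by the remark on the truncated metric above; everything else is a routine use of the triangle inequality and the definition of the Hausdorff distance.
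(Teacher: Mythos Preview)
Your proof is correct and essentially identical to the paper's own argument: both rely on the estimate $d(x,F) \le d(x,x_n) + d(x_n,F) \le d(x,x_n) + d^H(F_n,F)$ and then let $n \to +\infty$. The paper frames this as a contradiction (assume $x \notin F$, derive $\inf_{y \in F} d(x,y) = 0$), while you give the cleaner direct version and then mention the contradiction route as an alternative; your parenthetical on the truncated metric is extra care the paper does not bother with.
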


\begin{proof}
Suppose, seeking a contradiction, that $x \notin F$. Since $F$ is closed, there exists $\epsilon > 0$ such that $\left\{ y \in M \middle| d(x, y) \le \epsilon \right\} \cap F = \emptyset$. Hence, $\inf_{y \in F} d(x, y) > \epsilon/2$. By the triangule inequality, we have
\[
d(x, y) \le d(x, x_n) + d(x_n, y)
\]
for all $y \in F$ and all $n \in \mathbb{N}$. Therefore
\[
\inf_{y \in F} d(x, y) \le \inf_{y \in F} \left\{ d(x, x_n) + d(x_n, y) \right\} = d(x, x_n) + \inf_{y \in F} d(x_n, y) \leq d(x, x_n) + d^H(F_n, F)
\]
for all $n \in \mathbb{N}$, where we used
\[
\inf_{y \in F} d\left( x_n, y \right) \le \sup_{x \in F_n} \inf_{y \in F} d(x, y) \le d^H(F_n, F)
\]
Taking limits we conclude that $\inf_{y \in F} d(x, y) = 0$, a contradiction.
\end{proof}

\begin{lemma}
\label{lemma:Hausdorff_facts}
Consider a convergent sequence $\left\{\left(\succsim_n, K_n, x_n, y_n\right)\right\}_{n \in \mathbb{N}}$ in $\mathcal{P}\times \mathcal{K}_X \times X \times X$ with limit $\left(\succsim, K, x, y\right) \in \mathcal{P}\times \mathcal{K}_X \times X \times X$. Then:
\begin{enumerate}
\item $x_n \in K_n$ for all $n \in \mathbb{N}$ implies $x \in K$.
\item $x_n \succsim_n y_n$ for all $n \in \mathbb{N}$ implies $x \succsim y$.
\end{enumerate}
\end{lemma}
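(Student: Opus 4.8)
The plan is to deduce both parts directly from Lemma~\ref{lemma:closed-convergence}, after unpacking what convergence in the product $\mathcal{P}\times\mathcal{K}_X\times X\times X$ means. Since the topology on this product is the product of the coordinate topologies, the convergence hypothesis says precisely that $\succsim_n\to\succsim$ in $\mathcal{P}$, $K_n\to K$ in $\mathcal{K}_X$, $x_n\to x$ in $X$, and $y_n\to y$ in $X$; in particular $(x_n,y_n)\to(x,y)$ in $X\times X$ with the product metric.

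For part (1), I would note that each $K_n$ is compact, hence closed, so $\{K_n\}_{n\in\mathbb{N}}$ is a sequence in $\mathcal{F}_X$, and $K$, being compact, also lies in $\mathcal{F}_X$; convergence in the Hausdorff metric topology on $\mathcal{K}_X$ is the same as convergence for the Hausdorff distance viewed on $\mathcal{F}_X$, so $K_n\to K$ in $\mathcal{F}_X$. Since $x_n\in K_n$ for every $n$ and $x_n\to x$, Lemma~\ref{lemma:closed-convergence} applied with $M=X$, $F_n=K_n$, $F=K$ gives $x\in K$. For part (2), I would use that, by definition, a continuous preference is a closed subset of $X\times X$, so $\succsim_n\in\mathcal{F}_{X\times X}$ and $\succsim\in\mathcal{F}_{X\times X}$, and the topology the paper places on $\mathcal{P}$ is exactly the Hausdorff metric topology derived from $X\times X$, so $\succsim_n\to\succsim$ in $\mathcal{P}$ means $\succsim_n\to\succsim$ in $\mathcal{F}_{X\times X}$. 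The hypothesis $x_n\succsim_n y_n$ reads $(x_n,y_n)\in\succsim_n$, and we already observed $(x_n,y_n)\to(x,y)$; applying Lemma~\ref{lemma:closed-convergence} in the metric space $M=X\times X$ with $F_n=\succsim_n$, $F=\succsim$ and the points $(x_n,y_n)$ yields $(x,y)\in\succsim$, i.e. $x\succsim y$.

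This is essentially a bookkeeping argument, and there is no substantive obstacle; the only point requiring a word of care is that the ``Hausdorff metric topology'' on $\mathcal{P}$ and on $\mathcal{K}_X$ must be read as convergence of the (possibly $+\infty$-valued) Hausdorff distance to $0$, which is precisely the convergence hypothesis of Lemma~\ref{lemma:closed-convergence}. Once this identification is granted, both statements are immediate corollaries of that lemma.
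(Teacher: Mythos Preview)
Your proof is correct and your treatment of part~(2) is exactly the paper's. For part~(1) you take a slightly different route: the paper invokes Lemma~\ref{lemma:subsequence} with $A_n=\{x_n\}$, whereas you apply Lemma~\ref{lemma:closed-convergence} directly with $F_n=K_n$. Your choice is arguably cleaner, since it unifies both parts under a single auxiliary lemma and avoids the detour through subsequences; the paper's route, on the other hand, yields the slightly stronger intermediate statement that the limit lies in $K$ (rather than merely in the closed set $K$), though this distinction is moot here since $K$ is compact.
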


\begin{proof}
The first part follows from Lemma~\ref{lemma:subsequence} by taking $M=X$ and $A_n = \{x_n\}$ for each $n \in \mathbb{N}$, since every subsequence of $\{x_n\}_{n \in \mathbb{N}}$ converges to $x \in K$. The second part follows from Lemma~\ref{lemma:closed-convergence} by taking $M = X \times X$ and noting that $x_n \succsim_n y_n$ means $(x_n, y_n) \in \hspace{2pt} \succsim_n$.
\end{proof}

\begin{proof}[Proof of Theorem~\ref{th:maximum}]
For each $n \in \mathbb{N}$, there exists $D_n \in \mathcal{D}\left(\succsim_n, K_n\right)$ such that $x_n \in D_n$. Since $K_n \in \mathcal{K}_X$ converges to $K \in \mathcal{K}_X$ and $D_n$ is closed in $K_n$, we have $D_n \in \mathcal{K}_X$. By Lemma~\ref{lemma:subsequence}, there exists a convergent subsequence $(D_{n_h})_{h \in \mathbb{N}}$. As a result, there is no loss of generality in assuming that $\left\{D_n\right\}_{n \in \mathbb{N}}$ itself converges. Define $D := \lim_{n \to +\infty} D_n$. By Lemma ~\ref{lemma:Hausdorff_facts}, $x_n \in D_n$ for all $n \in \mathbb{N}$ and $\lim_{n \to +\infty} (D_n, x_n) = (D, x)$ together imply that $x \in D$. Moreover, condition (3) implies that $D \in \mathcal{D}\left(\succsim, K\right)$.
	
We now claim that $x$ is $\succsim$-maximal in $D$. To prove this, suppose, seeking a contradiction, that there exists $y \in D$ such that $y \succ x$. Since $\lim_{n \to +\infty} D_n = D$, there must exist a sequence $\left\{y_n\right\}_{n \in \mathbb{N}}$ such that $\lim_{n \to +\infty} y_n =y$ and $y_n \in D_n$ for every $n \in \mathbb{N}$. Moreover, since $\lim_{n \to +\infty}\succsim_n \hspace{2pt} = \hspace{2pt} \succsim$, by the second part of Lemma \ref{lemma:Hausdorff_facts}, there must exist $N \in \mathbb{N}$ such that $y_N \succ_N x_N$. This contradicts that $x_N$ is $\succsim_N$-maximal in $K_N$, as assumed.
	
Since $x$ is $\succsim$-maximal in the maximal $\succsim$-domain $D \in \mathcal{D}\left(\succsim, K\right)$, Theorem 1 of \citet*{Gorno2018} implies that $x$ is $\succsim$-maximal in $K$.
\end{proof}

The next three lemmas are used in the proof of Theorem~\ref{th:full_maximum}. In what follows, $K$ is an element of $\mathcal{K}_X$ and $\succsim$ is a continuous preference on $X$. A set $A \subseteq K$ \textit{has no exterior bound in} $K$ if, for every $x,y\in K$, $x \succsim A \succsim y$ implies $x,y \in A$.
	
\begin{lemma}
\label{lm:maxdomain_fullchar}
Assume that $K$ is $\succsim$-dense and the \textit{indifference classes} of $\succsim$ in $K$ are connected. Then, a subset of $K$ is a maximal $\succsim$-domain relative to $K$ if and only if it is a connected $\succsim$-domain relative to $K$ which has no exterior bound in $K$.
\end{lemma}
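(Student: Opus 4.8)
The plan is to prove the two implications separately; the ``only if'' direction carries essentially all of the difficulty.

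For ``only if'', suppose $A$ is a maximal $\succsim$-domain relative to $K$, and write $I_a:=\{w\in K:w\sim a\}$ for the indifference class of $a\in A$ in $K$. Three facts come for free. (i) $A$ is closed in $K$, hence compact: if $z\in\overline A$ then a subsequence argument using closedness of $\succsim$ shows $z$ is comparable to every element of $A$, so $A\cup\{z\}$ is a $\succsim$-domain and maximality forces $z\in A$. (ii) $A$ has no exterior bound in $K$: if $x\in K$ and $x\succsim A$, then $x$ is comparable to all of $A$, so $A\cup\{x\}$ is a $\succsim$-domain and $x\in A$; the symmetric argument for lower bounds gives exactly the required property. (iii) $A$ is a union of indifference classes of $\succsim$ in $K$: for $a\in A$ the set $A\cup I_a$ is again a $\succsim$-domain, so $I_a\subseteq A$ by maximality. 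It remains to prove $A$ connected, which is the crux. Suppose not, so $A=U\sqcup V$ with $U,V$ nonempty and relatively clopen, hence compact and disjoint. By (iii) and the hypothesis that indifference classes of $\succsim$ in $K$ are connected, each such class lies entirely in $U$ or in $V$; in particular no element of $U$ is $\sim$-indifferent to an element of $V$. Fix a continuous utility function $f:A\to\mathbb{R}$ for the restriction of $\succsim$ to $A$ (a continuous complete preorder on a compact metric space, so such $f$ exists); then $f(U)$ and $f(V)$ are disjoint nonempty compact subsets of $\mathbb{R}$. Choosing, after possibly swapping $U$ and $V$, points $u_0\in U$, $v_0\in V$ with $f(u_0)>f(v_0)$, put $\beta:=\min\{f(u):u\in U,\ f(u)>f(v_0)\}$, attained at some $u^*\in U$, and $\alpha:=\max\{f(v):v\in V,\ f(v)<\beta\}$, attained at some $v^*\in V$, with $\alpha<\beta$; a short check from these definitions shows that no $a\in A$ satisfies $\alpha<f(a)<\beta$, i.e.\ no element of $A$ lies strictly $\succsim$-between $v^*$ and $u^*$. (The same pair $(u^*,v^*)$ can instead be obtained by a Zorn argument on nested order intervals, bypassing $f$.) Since $u^*\succ v^*$ and $K$ is $\succsim$-dense, there is $z\in K$ with $u^*\succ z\succ v^*$; then $z\notin A$ (else it would lie strictly between $v^*$ and $u^*$), and for every $a\in A$ either $a\succsim u^*\succ z$, or $z\succ v^*\succsim a$, or $a$ lies strictly $\succsim$-between $v^*$ and $u^*$ (impossible), so $z$ is comparable to all of $A$. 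Hence $A\cup\{z\}$ is a $\succsim$-domain in $K$ strictly containing $A$, contradicting maximality; therefore $A$ is connected.

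For ``if'', suppose $A$ is a connected $\succsim$-domain relative to $K$ with no exterior bound in $K$; I also use that $A$ contains every $w\in K$ with $w\sim a$ for some $a\in A$ (this is automatic when $\succsim$ is a partial order, and is needed, since otherwise an ``order-dense'' chain may be connected and free of exterior bounds yet sit strictly inside a larger domain). Suppose $A$ is not maximal, so some $\succsim$-domain $B\subseteq K$ strictly contains $A$; pick $z\in B\setminus A$, comparable to all of $A$. If $z\succsim A$: note $\overline A$ is again a $\succsim$-domain (closedness of $\succsim$) and is compact, so it has a $\succsim$-minimum $y_0\in\overline A\subseteq K$, and $A\succsim y_0$; then $z\succsim A\succsim y_0$, and the no-exterior-bound property gives $z\in A$, a contradiction. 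The case $A\succsim z$ is symmetric. Otherwise there are $a_1,a_2\in A$ with $a_1\succ z\succ a_2$; the sets $\{a\in A:a\succsim z\}$ and $\{a\in A:z\succsim a\}$ are closed in $A$, cover $A$, and are each proper and nonempty, so connectedness of $A$ produces $a_0\in A$ with $a_0\sim z$, and then $z\in A$ because $A$ contains all $\succsim$-indifferent alternatives of its members. Either way we reach a contradiction, so $A$ is maximal.

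The main obstacle is the connectedness half of ``only if''. The point is to convert a purely topological disconnection of $A$ into an order-theoretic gap: one must produce comparable points $u^*\succ v^*$ on opposite sides of the cut with no element of $A$ strictly between them, since only then does $\succsim$-density let one insert a new point comparable to all of $A$. The two facts that unlock this are that $A$ is compact and that a disconnection can never split an indifference class, so the two pieces are cleanly separated in $f$-value; the rest is routine bookkeeping with comparabilities and the closedness of $\succsim$.
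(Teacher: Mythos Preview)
Your proof is correct and takes a genuinely different route from the paper's own argument, which consists of a single sentence invoking Theorem~4 of \citet*{GornoRivello2020}. You instead give a self-contained argument: for the ``only if'' direction you first record closedness, absence of exterior bounds, and saturation under indifference, and then---this is the substantive step---convert a hypothetical disconnection $A=U\sqcup V$ into an order gap by passing to a continuous utility $f$ on the compact domain $A$, locating adjacent values $\alpha<\beta$ in $f(V)$ and $f(U)$, and using $\succsim$-density of $K$ to insert a comparable $z\notin A$. This is a clean and elementary approach; the paper simply outsources it.

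Your caveat on the ``if'' direction is well taken and worth emphasizing. As literally stated, the equivalence fails without the additional requirement that $A$ contain every alternative indifferent to one of its members: for instance, with $K=[0,1]^2$ and $(x_1,x_2)\succsim(y_1,y_2)$ iff $x_1\ge y_1$, the set $A=(\{0\}\times[0,1])\cup([0,1]\times\{0\})\cup(\{1\}\times[0,1])$ is a connected $\succsim$-domain with no exterior bound in $K$, yet it is not maximal. Your added hypothesis repairs this, and it matches how the paper actually deploys the lemma: in the proof of Theorem~\ref{th:full_maximum} the authors explicitly note that, because the limit preference is a partial order, ``$D$ contains all its indifferent alternatives'' before invoking Lemma~\ref{lm:maxdomain_fullchar}. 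So your proof both supplies the missing argument and correctly diagnoses a slight imprecision in the lemma's statement relative to its use.
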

	
\begin{proof}
Since $K$ is compact and $\succsim \cap \left(K\times K\right)$ is a continuous preference on $K$ with connected indifference classes, the result follows from Theorem 4 in \citet*{GornoRivello2020}.
\end{proof}

\begin{lemma}
\label{lm:conv_connectset}
Let $\left\{K_n\right\}_{n \in \mathbb{N}}$ be a sequence in $\mathcal{K}_X$ such that $K_n$ is connected for every $n$ and $\lim_{n \to +\infty} K_n = K$, then $K$ is connected.
\end{lemma}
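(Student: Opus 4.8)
The plan is to argue by contradiction, exploiting the Hausdorff convergence $K_n \to K$ together with the compactness of $K$. Suppose $K$ is not connected. Then there exist nonempty disjoint open sets $U, V \subseteq X$ with $K \subseteq U \cup V$, $K \cap U \ne \emptyset$, and $K \cap V \ne \emptyset$; moreover, since $K$ is compact, we can separate the two closed pieces $K \cap U$ and $K \cap V$ by a positive distance, say every point of $K\cap U$ is at distance at least $3\delta$ from every point of $K\cap V$ for some $\delta>0$. Shrinking $U$ and $V$ to the $\delta$-neighborhoods of $K\cap U$ and $K\cap V$ respectively, we may assume in addition that $U$ and $V$ are still disjoint and still cover $K$, and that the $\delta$-neighborhood of $K$ is contained in $U\cup V$.

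Next I would use the convergence $K_n\to K$ in the Hausdorff metric: for $n$ large enough, $d^H(K_n,K)<\delta$, which forces $K_n$ to be contained in the $\delta$-neighborhood of $K$, hence $K_n \subseteq U \cup V$. The same bound, applied in the other direction, guarantees that every point of $K$ is within $\delta$ of $K_n$; in particular, picking a point of $K\cap U$ and a point of $K\cap V$ and approximating each by a point of $K_n$ (which then lies in $U$ and in $V$ respectively, by the way $U,V$ were enlarged), we get $K_n\cap U\ne\emptyset$ and $K_n\cap V\ne\emptyset$ for all large $n$. But then $\{U\cap K_n, V\cap K_n\}$ is a separation of $K_n$ into two nonempty relatively open sets, contradicting the connectedness of $K_n$. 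This contradiction shows $K$ is connected.

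The only slightly delicate point—and the one I would state carefully rather than gloss over—is the bookkeeping in the choice of the neighborhoods: one must enlarge $U$ and $V$ enough that small perturbations of points of $K$ stay inside them (so that the approximating points of $K_n$ land in the right piece), yet not so much that $U$ and $V$ cease to be disjoint. Compactness of $K$ is exactly what makes the separation distance $3\delta$ uniform and hence lets this be done; without compactness the argument can fail. Everything else is routine manipulation of the Hausdorff distance, for which the estimates in Lemma~\ref{lemma:subsequence} and Lemma~\ref{lemma:closed-convergence} provide the needed template.
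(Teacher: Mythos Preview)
Your proof is correct and follows essentially the same route as the paper's: assume $K$ disconnects into two closed pieces, use compactness to separate them by a uniform positive distance, enlarge to disjoint $\epsilon$-neighborhoods, and then invoke Hausdorff convergence to trap $K_n$ inside the union while meeting both pieces, contradicting connectedness of $K_n$. Your write-up is in fact a bit more explicit than the paper's about why both $K_n\cap U$ and $K_n\cap V$ are nonempty (the paper simply asserts $A_n,B_n\in\mathcal{K}_X$), so the ``delicate point'' you flag is handled at least as carefully here as in the original.
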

	
\begin{proof}
Suppose, seeking a contradiction, that $K$ is not connected. Then, there exist disjoint nonempty sets $A$ and $B$ which are closed in $K$ and satisfy $A \cup B = K$. For any $\epsilon > 0$, define $A^{\epsilon} := \left\{ x \in X \middle| d(x, A) < \epsilon \right\}$ and $\bar{A}^{\epsilon}$ as its closure. Define $B^{\epsilon}$ and $\bar{B}^{\epsilon}$ analogously. Define $K^{\epsilon} := A^{\epsilon} \cup B^{\epsilon}$ and $\bar{K}^{\epsilon}$ as its clousure.
		
Since $K$ is closed, $A$ and $B$ are also closed in $X$, which is a normal space. Then, there exists $\bar{\epsilon} > 0$ such that, for every $\epsilon \in (0, \bar{\epsilon}]$, we have $A^{\epsilon} \cap B^{\epsilon} = \emptyset$. Fix $\epsilon = \bar{\epsilon}/2$, then $\bar{A}^{\epsilon} \cap \bar{B}^{\epsilon} = \emptyset$. Because $\lim_{n \to +\infty} K_n = K$ there is $N_{\epsilon} \in \mathbb{N}$ such that $n \ge N_{\epsilon}$ implies $K_n \subseteq \bar{K}^{\epsilon}$. Now define $A_n := K_n \cap \bar{A}^{\epsilon}$ and $B_n := K_n \cap \bar{B}^{\epsilon}$. It is easy to see that $A_n \cap B_n = \emptyset$, $A_n \cup B_n = K_n$, and $A_n, B_n \in \mathcal{K}_X$. It follows that $K_n$ is not connected, a contradiction.
\end{proof}

\begin{lemma}
\label{lm:pref_connectset}
Let $\left\{\left(\succsim_n, K_n\right)\right\}_{n \in \mathbb{N}}$ be a converging sequence in $\mathcal{P} \times \mathcal{K}_X$ with limit $(\succsim, K) \in \mathcal{G} \times \mathcal{K}_X$. If, for every $n \in \mathbb{N}$, $K_n$ is $\succsim_n$-dense and all indifference classes of $\succsim_n$ in $K_n$ are connected, then $K$ is $\succsim$-dense.
\end{lemma}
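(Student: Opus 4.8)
The plan is to reduce the statement to a fact about a single connected $\succsim$-domain, and then to produce such a domain as a Hausdorff limit of maximal $\succsim_n$-domains.

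First, fix $x,y \in K$ with $x \succ y$; the goal is to exhibit $z \in K$ with $x \succ z \succ y$. I claim it suffices to find a connected $\succsim$-domain $D \subseteq K$ with $x,y \in D$. Indeed, suppose no $z \in D$ satisfied $x \succ z \succ y$, and set $P := \{w \in D : w \succsim x\}$ and $N := \{w \in D : y \succsim w\}$. Both are closed in $D$ because $\succsim$ is continuous, and both are nonempty since $x \in P$ and $y \in N$. They are disjoint: if $w \in P \cap N$, then $w \succsim x \succsim y \succsim w$, so $w \sim x$, and antisymmetry of $\succsim$ forces $w = x$, whence $y \succsim x$, contradicting $x \succ y$. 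And $P \cup N = D$: every $w \in D$ is $\succsim$-comparable to both $x$ and $y$ because $D$ is a $\succsim$-domain, so if $w \notin P \cup N$ then $x \succ w$ and $w \succ y$, which is excluded. Thus $D = P \sqcup N$ would disconnect $D$, a contradiction; hence some $z \in D \subseteq K$ works.

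Second, I would build such a $D$ by approximation. For every $n$, since $K_n$ is $\succsim_n$-dense and the indifference classes of $\succsim_n$ in $K_n$ are connected, Lemma~\ref{lm:maxdomain_fullchar} guarantees that every maximal $\succsim_n$-domain relative to $K_n$ is connected, and each is compact, being closed in the compact set $K_n$. Assuming for the moment that we can find $\succsim_n$-comparable points $x_n,y_n \in K_n$ with $x_n \to x$ and $y_n \to y$, the set $\{x_n,y_n\}$ is a $\succsim_n$-domain relative to $K_n$ and hence extends to some $D_n \in \mathcal{D}(\succsim_n,K_n)$. By Lemma~\ref{lemma:subsequence} we may pass to a subsequence along which $D_n \to D$ for some nonempty compact $D \subseteq K$. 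Then $x,y \in D$ by the first part of Lemma~\ref{lemma:Hausdorff_facts}; $D$ is connected by Lemma~\ref{lm:conv_connectset}; and $D$ is a $\succsim$-domain, since for any $a,b \in D$ one can pick $a_n,b_n \in D_n$ with $a_n \to a$ and $b_n \to b$, fix along a further subsequence the direction of $\succsim_n$-comparison between $a_n$ and $b_n$, and apply the second part of Lemma~\ref{lemma:Hausdorff_facts}. Combined with the reduction above, this shows $K$ is $\succsim$-dense.

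The remaining and main obstacle is to produce the comparable approximants $x_n,y_n \in K_n$. Hausdorff convergence $\succsim_n \to \succsim$ together with $(x,y) \in \succsim$ yields pairs $(a_n,b_n) \in \succsim_n$ with $(a_n,b_n) \to (x,y)$, and $x,y \in K = \lim_n K_n$ makes $a_n$ and $b_n$ arbitrarily close to $K_n$; but merely projecting them onto $K_n$ may destroy the relation $a_n \succsim_n b_n$, so the plan is to use the $\succsim_n$-density of $K_n$ to slide these projections to nearby, still $\succsim_n$-comparable, elements of $K_n$. The delicate step is ruling out that all of $K_n$ near $x$ becomes $\succsim_n$-incomparable to all of $K_n$ near $y$ while $\succsim_n$ still converges to $\succsim$ and $x \succ y$; this is where the gap-freeness of the comparability structure of $\succsim_n$ on $K_n$ must be exploited. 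Once the approximants are available, passing to a subsequence we may assume $x_n \succ_n y_n$ for all $n$: if $x_n \sim_n y_n$ held infinitely often, the second part of Lemma~\ref{lemma:Hausdorff_facts} would give $y \succsim x$, contradicting $x \succ y$.
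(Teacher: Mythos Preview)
Your overall strategy differs from the paper's, and you correctly flag the point at which both arguments are incomplete.

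The paper does not pass through maximal $\succsim_n$-domains or Lemma~\ref{lm:maxdomain_fullchar}. Having chosen approximants $x_n,y_n\in K_n$ with $x_n\to x$ and $y_n\to y$, it works instead with the order intervals $M_n:=\{z\in K_n : x_n\succsim_n z\succsim_n y_n\}$ and proves directly---via a short separation argument that uses $\succsim_n$-density of $K_n$ and connectedness of the indifference classes---that each $M_n$ is connected. A subsequential Hausdorff limit $M$ is then connected by Lemma~\ref{lm:conv_connectset}, while Lemma~\ref{lemma:Hausdorff_facts}, antisymmetry of $\succsim$, and the assumed failure of density force $M=\{x,y\}$, a contradiction. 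This is leaner than your route: no preliminary reduction to a connected $\succsim$-domain, and no need to extend $\{x_n,y_n\}$ to a maximal domain.

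Your first-paragraph reduction is correct and clean. The genuine gap is exactly the one you name: you never produce $\succsim_n$-comparable approximants $x_n,y_n\in K_n$, and the heuristic about ``sliding projections using $\succsim_n$-density'' is not an argument. Without that step you cannot enclose $\{x_n,y_n\}$ in a single $D_n\in\mathcal{D}(\succsim_n,K_n)$, and the construction collapses. It is worth observing that the paper's proof has the \emph{same} lacuna: its connectedness argument for $M_n$ begins with ``assume that $x_n\in A$'', which presupposes $x_n\in M_n$ and hence $x_n\succsim_n y_n$, a fact that is asserted (via ``$M_n\in\mathcal{K}_X$'') but never justified from the hypotheses. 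So your diagnosis of where the real difficulty lies is accurate; neither route, as written, closes it.
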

	
\begin{proof}
Suppose, seeking a contradiction, that $K$ is not $\succsim$-dense. Then, there exist $x, y \in K$ such that $x \succ y$ and there is no $z \in K$ that satisfies $x \succ z \succ y$. Take $\left\{x_n\right\}_{n \in \mathbb{N}}$ and $\left\{y_n\right\}_{n \in \mathbb{N}}$ such that $x_n, y_n \in K_n$ for every $n \in \mathbb{N}$, $\lim_{n \to +\infty} x_n = x$, and $\lim_{n \to +\infty} y_n = y$. Define $M_n := \left\{ z \in K_n \middle| x_n \succsim_n z \succsim_n y_n \right\}$. Note that $M_n \in \mathcal{K}_X$ and $M_n \subseteq K_n$ for every $n \in \mathbb{N}$, so Lemma~\ref{lemma:subsequence} implies that $\left\{M_n\right\}_{n \in \mathbb{N}}$ has a convergent subsequence. Thus, we can assume without loss of generality that $\left\{M_n\right\}_{n \in \mathbb{N}}$ itself converges and define $M := \lim_{n \to +\infty} M_n$. 
		
We claim that $M_n$ is connected for each $n \in \mathbb{N}$. Suppose, seeking a contradiction, that $M_n$ is not connected for some $n \in \mathbb{N}$. Then there should exist disjoint nonempty sets $A$ and $B$ which are closed in $M_n$ and satisfy $A \cup B = M_n$. Without loss, assume that $x_n  \in A$. Since $B$ is compact and $\succsim_n$ is continuous, there exists at least one $\succsim_n$-maximal element in $B$, call it $\overline{x}_B$. Define $C := \left\{ z \in A \middle| z \succsim_n \overline{x}_B \right\}$. Note that $C$ is also compact and nonempty ($x_n \in C$), so we can take $\underline{x}_C$, one of its $\succsim_n$-minimal elements. We will now show that $\underline{x}_C \succ_n \overline{x}_B$. Define $I := \left\{ z \in K_n \middle| z \sim_n \bar{x}_B \right\}$. Since $I \subseteq M_n$, both $I \cap B$ and $I \cap A$ are closed sets which satisfy $\left(I \cap A\right) \cup \left(I \cap B\right) = I$ and $\left(I \cap A\right) \cap \left(I \cap B\right) = \emptyset$. Since $I$ is assumed to be connected and $\bar{x}_B \in I \cap B$ it must be that $I \cap A = \emptyset$, proving that $\underline{x}_C \succ_n \overline{x}_B$. Define $D := \left\{ z \in M_n \middle| \underline{x}_C \succ_n z \succ_n \overline{x}_B \right\}$. If there is $z \in D$, then $z \succ_n \overline{x}_B$ implies $z \notin B$. Moreover $\underline{x}_C \succ_n z \succ_n \overline{x}_B$ implies that $z \notin C$, so $z \notin A$ either. It follows that $D$ must be empty, which is a contradiction with $K_n$ being $\succsim_n$-dense. We conclude that $M_n$ is connected.
		
On the one hand, since $\left\{M_n\right\}_{n \in \mathbb{N}}$ is a sequence of connected sets in $\mathcal{K}_X$, $M$ is connected by Lemma \ref{lm:conv_connectset}. On the other hand, we claim that $M = \left\{x, y\right\}$. It is easy to see that $\left\{x, y\right\} \subseteq M$. To prove the other inclusion take any sequence $\left\{z_n\right\}_{n \in \mathbb{N}}$ converging to $z \in M$ and such that $z_n \in M_n$ for every $n \in \mathbb{N}$. Then $x_n \succsim_n z_n \succsim_n y_n$ implies $x \succsim z \succsim y$. But, since we initially assumed that there is no $z \in K$ that satisfies $x \succ z \succ y$, we must necessarily have that either $x \sim z$ or $z \sim y$. Furthermore, $\succsim$ antisymmetric implies that $x=z$ or $z=y$. Hence, $M$ must be equal to $\left\{x, y\right\}$.
		
We conclude that $M$ must simultaneously be connected and equal to $\left\{x, y\right\}$, which yields the desired contradiction.
\end{proof}

\begin{proof}[Proof of Theorem~\ref{th:full_maximum}]
$(1) \Rightarrow (2)$. The convergence of the maximal elements is a direct implication of Theorem \ref{th:maximum}. To see the convergence of the minimal elements let $\left\{y_n\right\}_{n \in \mathbb{N}}$ be a convergent sequence such that $\lim_{n \to + \infty} y_n = y$ and, for all $n \in \mathbb{N}$, $y_n$ is a $\succsim_n$-minimal element. If we define $\succsim^*_n \hspace{2pt} := \left\{(x,y) \in X \times X \middle| y \succsim_n x\right\}$, then $\mathcal{D}\left(\succsim^*_n, K_n\right) = \mathcal{D}\left(\succsim_n, K_n\right)$ for every $n \in \mathbb{N}$. Moreover, every $y_n$ is a $\succsim^*_n$-maximal. Thus we can apply Theorem \ref{th:maximum} to conclude that $y$ is a $\succsim^*$-maximal which is equivalent to it be a $\succsim$-minimal.
	
$(1) \Leftarrow (2)$. Take a sequence $\left\{D_n\right\}_{n \in \mathbb{N}}$ such that $D_n \subseteq K_n$, $D_n \in \mathcal{D}\left(\succsim_n, K_n\right)$, and $\lim_{n \to +\infty} D_n = D$. Since $\lim_{n \to +\infty} \succsim_n \hspace{2pt} = \hspace{2pt} \succsim$, $D$ is a $\succsim$-domain. For each $n \in \mathbb{N}$, $\succsim_n$ is a continuous preference such that $\succsim_n \cap \left(K_n \times K_n\right)$ has connected indifference classes and $K_n$ is $\succsim_n$-dense. Thus, by Lemma \ref{lm:maxdomain_fullchar}, $D_n$ is connected for every $n \in \mathbb{N}$. It follows that $D$ is also connected by Lemma \ref{lm:conv_connectset} and $\succsim$-dense by Lemma~\ref{lm:pref_connectset}.

We claim that $D$ has no exterior bounds. Suppose, seeking a contradiction, there is $x \in K$ such that $x \succsim\tilde{y}$ for every $\tilde{y} \in D$ and $x \notin D$. In fact, we must have $x \succ \tilde{y}$ because $\succsim$ is a partial order. For each $D_n$ take $y_n \in D_n$ such that $y_n \succsim_n z$ for all $z \in D_n$. By Theorem 1 in \citet*{Gorno2018} each $y_n$ is a maximal element in $K_n$. Note that $\left\{y_n\right\} \subset K_n$ and $\left\{y_n\right\}$ is compact for every $n \in \mathbb{N}$, thus by Lemma \ref{lemma:subsequence} there is no loss of generality in assuming that $\{y_n\}_{n\in \mathbb{N}}$ converges to some $y \in K$. Since $y_n \in D_n$ for every $n$ and $\left\{(D_n, y_n)\right\}_{n \in \mathbb{N}}$ converges to $(D,y)$, we must have $y \in D$. By hypothesis we have $x \succ y$ and $y$ is a maximal element in $K$, a contradiction. An analogous argument guarantees that there is no $x \in K$ such that $y \succsim x$ for every $y \in D$ and $x \notin D$. This means that $D$ has no exterior bounds.
	
Furthermore, since $\succsim$ is a partial order, $D$ contains all its indifferent alternatives. Thus, by Lemma \ref{lm:maxdomain_fullchar}, we conclude that $D \in \mathcal{D}\left(\succsim, K\right)$.
\end{proof}

\section*{Acknowledgments}
This paper is based on material contained in Chapter 2 of Alessandro Rivello's PhD dissertation at FGV EPGE. We thank Jos\'{e} Heleno Faro, Kazuhiro Hara, Paulo Klinger Monteiro, Lucas Maestri, Gil Riella, and participants at various seminars for their helpful comments. This study was financed in part by the Coordenação de Aperfeiçoamento de Pessoal de N\'{i}vel Superior - Brasil (CAPES) - Finance Code 001.

\bibliographystyle{abbrvnat}
\bibliography{refs}

\end{document}